\newcommand{\Oh}[1]{\ensuremath{\mathcal{O}\!\left( {#1} \right)}}
\newcommand{\PL}{\ensuremath{\mathsf{PL}}}
\newtheorem{thrm}{Theorem}
\newtheorem{lemm}[thrm]{Lemma}
\newcommand{\GPL}{\ensuremath{\mathsf{GPL}}}
\newcommand{\rev}[1]{{#1}^R}
\def\balgorithm#1{{\bf Algorithm \texttt{#1}}}
\def\bfor{{\bf for\ }}
\def\bforeach{{\bf foreach\ }}
\def\bto{{\bf to\ }}
\def\band{{\bf and\ }}
\def\bdo{{\bf do\ }}
\def\bif{{\bf if\ }}
\def\bthen{{\bf then\ }}
\def\belse{{\bf else\ }}
\def\breturn{{\bf return\ }}
\def\la{\leftarrow}
\def\+{\!+\!}
\def\-{\!-\!}
\begin{document}

\begin{frontmatter}

\journal{Journal of Discrete Algorithms}

\title{A Subquadratic Algorithm for \mbox{Minimum Palindromic Factorization}}

\author[up]{Gabriele Fici}\ead{gabriele.fici@unipa.it}
\author[uh]{Travis Gagie}\ead{travis.gagie@cs.helsinki.fi}
\author[uh]{Juha K\"arkk\"ainen}\ead{juha.karkkainen@cs.helsinki.fi}
\author[uh]{Dominik Kempa}\ead{dominik.kempa@cs.helsinki.fi}

\address[up]{Dipartimento di Matematica e Informatica, Universit\`a di Palermo,
Italy}
\address[uh]{Department of Computer Science, University of Helsinki, Finland}

\begin{keyword}
  string algorithms\sep
  palindromes\sep
  factorization
\end{keyword}

\begin{abstract}
  We give an $\Oh{n \log n}$-time, $\Oh{n}$-space algorithm for
  factoring a string into the minimum number of palindromic
  substrings. That is, given a string \(S [1..n]\), in $\Oh{n \log n}$
  time our algorithm returns the minimum number of palindromes \(S_1,
  \ldots, S_\ell\) such that \(S = S_1 \cdots S_\ell\).  We also show
  that the time complexity is $\Oh{n}$ on average and $\Omega(n\log
  n)$ in the worst case. The last result is based on a
  characterization of the palindromic structure of Zimin words.
\end{abstract}

\end{frontmatter}

\section{Introduction} \label{sec:introduction}

Palindromic substrings are a well-studied topic in stringology and
combinatorics on words.  Since a single character is a palindrome,
there are always between $n$ and \(\binom{n}{2} + n = \Theta (n^2)\)
non-empty palindromic substrings in a string of length $n$.  There are
only \(2 n - 1\) possible centers of those substrings,
however\;---\;i.e., the $n$ individual characters and the \(n - 1\)
gaps between them\;---\;so many algorithms involving palindromic
substrings still run in subquadratic time.  For example,
Manacher~\cite{Man75} gave a linear-time algorithm for listing all the
palindromic prefixes of a string.  Apostolico, Breslauer and
Galil~\cite{ABG95} observed that Manacher's algorithm can be used to
list in linear time all maximal palindromic substrings, which are
those that cannot be extended without changing the position of the
center.  Other linear-time algorithms for this problem were given by
Jeuring~\cite{Jeu94} and Gusfield~\cite{Gus97}.  Since any palindromic
substring is contained within the maximal palindromic substring with
the same center, the list of all maximal palindromic substrings can be
viewed as a linear-space representation of all palindromic substrings.
For more discussion of algorithms involving palindromes, we refer the
reader to Jeuring's recent survey~\cite{Jeu13}.

Palindromes are a useful tool for investigating string complexity; see, e.g.,~\cite{PalCom}.
A natural measure of the asymmetry of a string $S$ is its palindromic length \(\PL (S)\),
which is the minimum number of palindromic
substrings into which $S$ can be factored.  That is, \(\PL (S)\) is
the minimum number $\ell$ such that there exist palindromes \(S_1,
\ldots, S_\ell\) whose concatenation \(S_1 \cdots S_\ell = S\).  For
example, $\PL (abaab) = 2$ and $\PL (abaca) = 3$.  Notice that, since
a single character is a palindrome, \(\PL (S)\) is always well-defined
and lies between 0 and $|S|$, or 1 and $|S|$ if $S$ is non-empty.
In fact, \(\PL (S [1..i]) - 1 \leq \PL (S [1..i + 1]) \leq \PL (S [1..i]) + 1\) for \(i < |S|\):
first, if \(S_1, \ldots, S_{\ell - 1}, S [h..i + 1]\) is a factorization of \(S [1..i + 1]\) into $\ell$ palindromic substrings, then \(S_1, \ldots, S_{\ell - 1}, S [h], S [h + 1..i]\) is a factorization of \(S [1..i]\) into \(\ell + 1\) palindromic substrings;
second, if \(S_1, \ldots, S_\ell\) is a factorization of \(S [1..i]\) into $\ell$ palindromic substrings, then \(S_1, \ldots, S_\ell, S [i + 1]\) is a factorization of \(S [1..i + 1]\) into \(\ell + 1\) palindromic substrings.

We became interested in palindromic length because of a recent
conjecture by Frid, Puzynina and Zamboni~\cite{FPZ13}.
Some infinite strings (e.g., the regular paperfolding sequence) are
highly asymmetric in that they contain only a finite number of
distinct palindromic substrings; see~\cite{FiZa13} for more
discussion.  For such strings, the palindromic length of any finite
substring is proportional to that substring's length.  In contrast,
for other infinite strings (e.g., the infinite power of any
palindrome), the palindromic length of any finite substring is
bounded.  Frid et al.\ conjectured that all such infinite strings are
(ultimately) periodic.

It is easy to compute \(\PL (S)\) in quadratic time via dynamic
programming.  Alatabbi, Iliopoulos and Rahman~\cite{AIR13}
recently gave a linear-time algorithm for computing a minimum
factorization of $S$ into \emph{maximal} palindromic substrings, when
such a factorization exists; it does not exist for, e.g., \(abaca\).
Even when such a factorization exists, it may consist of more than
\(\PL (S)\) substrings; e.g., $abbaabaabbba$ can be factored into
\(abba\), \(aba\) and \(abbba\) but cannot be factored into fewer than
four maximal palindromic substrings.  

In this paper, we give an $\Oh{n \log n}$-time and $\Oh{n}$-space
algorithm for factoring $S$ into \(\PL (S)\) palindromic substrings.
The average case time complexity is in fact linear, but 
the worst case is $\Theta(n\log n)$, which we show by an
analysis of the palindromic structure of Zimin
words~\cite[Chapter~5.4]{BLRS08}.

Independently of us, I, Sugimoto, Inenaga, Bannai and
Takeda~\cite{ISIBT14} discovered essentially the same
algorithm. Also, Kosolobov, Rubinchik and
Shur~\cite{KRS14} have recently described an
algorithm recognizing strings with a given palindromic length. Their
result can be used for computing the palindromic length of a string $S$
in $\Oh{|S| \cdot \PL(S)}$ time.\footnote{Editors' note:
we are satisfied that the results of this paper, and
those of~\cite{ISIBT14} and~\cite{KRS14}, have all been achieved independently.}

\section{A Simple Quadratic Algorithm}
\label{sec:quadratic}

We start by describing a simple algorithm for computing \(\PL (S)\) in
$\Oh{n^2}$ time and $\Oh{n}$ space
using the observation that, for \(1 \leq j \leq n\),
\begin{equation*}
\label{eq:pl-formula}
  \PL (S [1..j]) = \min_i \left\{ \rule{0ex}{2ex} \PL (S [1..i - 1]) + 1\ : \ i
\leq j,\ \mbox{\(S [i..j]\) is a palindrome} \right\}\,.
\end{equation*}
We compute and store an array $\PL[0..n]$, where $\PL[0]=0$ and 
$\PL[i]=\PL(S[1..i])$ for $i \ge 1$. At each step $j$, we
compute the set $P_j$ of the starting positions of all
palindromes ending at $j$ from the set $P_{j-1}$ using the
observation that \(S [i..j]\), $i+1\le j-1$, is a palindrome if and only if
\(S [i + 1..j - 1]\) is a
palindrome and \(S [i] = S [j]\).  The algorithm is given in
Figure~\ref{fig-quadratic-algorithm}.
\begin{figure}[t]
\centering
\begin{tabbing}
00: \=\qquad\=\qquad\=\qquad\=\qquad\=\qquad\=\kill
 \balgorithm{Palindromic-length}($S[1..n]$)\\
 1:\>$\PL[0] \la 0$\\
 2:\>$P \la \emptyset$ \\
 3:\>\bfor $j \la 1$ \bto $n$ \bdo\\
 4:\>\>$P' \la \emptyset$\\
 5:\>\>\bforeach $i\in P$ \bdo\\
 6:\>\>\>\bif $i>1$ \band $S[i-1]=S[j]$ \bthen\\
 7:\>\>\>\>$P' \la P' \cup \{i-1\}$\\
 8:\>\>\bif $j>1$ \band $S[j-1]=S[j]$ \bthen\\
 9:\>\>\>$P' \la P' \cup \{j-1\}$\\
 10:\>\>$P \la P' \cup \{j\}$\\
 11:\>\>$\PL[j] \la j$\\
 12:\>\>\bforeach $i\in P$ \bdo\\
 13:\>\>\>$\PL[j] \la \min(\PL[j],\PL[i-1]+1)$\\
 14:\>\breturn $\PL[n]$
\end{tabbing}
\caption{A simple quadratic-time algorithm for computing the palindromic length.
Every iteration of the for loop in line 3 starts with $P=P_{j-1}$ and ends with
$P=P_j$.}
\label{fig-quadratic-algorithm}
\end{figure}

The space requirement is clearly $\Oh{n}$.  During the $j$th step of
the algorithm, we use time $\Oh{|P_j|+|P_{j-1}|}$, so for all the
steps we use total time proportional to the number of palindromic
substrings in $S$. For most strings the time is linear (see
Theorem~\ref{thrm:average}) but the worst case is quadratic, e.g., for
$S=a^n$ or \(S=(ab)^{n / 2}\).

It is straightforward to modify the algorithm so that it produces an
actual minimum palindromic factorization of $S$, without increasing
the running time or space by more than a constant factor.

\section{Faster Computation of Palindromes}
\label{sec:groups}

In this section, we replace the representation $P_j$ of 
the palindromes ending at $j$ with a more compact representation $G_j$
that needs only $\Oh{\log j}$ space
and can be computed in $\Oh{\log j}$
time from $G_{j-1}$. The
representation is based on combinatorial properties of palindromes.

A string $y$ is a \emph{border} of a string $x$ if $y$ is both a
prefix of $x$ and a suffix of $x$, and a \emph{proper} border if
$y\neq x$.  
The following easy
lemmas establish a connection between borders and palindromes.

\begin{lemm}[\cite{blondin2008palindromic}]
\label{lemm:palbor}
Let $y$ be a suffix of a palindrome $x$. Then $y$ is a border of $x$
iff $y$ is a palindrome.
\end{lemm}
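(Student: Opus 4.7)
The plan is to prove both directions of the biconditional by exploiting the defining identity of a palindrome, namely $x = \rev{x}$, together with the basic fact that taking reverses swaps the roles of prefix and suffix: $y$ is a prefix of $z$ iff $\rev{y}$ is a suffix of $\rev{z}$.

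For the forward direction, I would assume $y$ is a border of $x$, so $y$ is simultaneously a prefix of $x$ and a suffix of $x$. Since $y$ is a suffix of $x = \rev{x}$, its reverse $\rev{y}$ is a prefix of $x$. But $y$ itself is also a prefix of $x$, and two prefixes of $x$ of the same length must coincide, so $\rev{y} = y$ and $y$ is a palindrome.

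For the backward direction, I would assume $y$ is a suffix of $x$ and that $y$ is itself a palindrome, so $\rev{y} = y$. Because $y$ is a suffix of $x$, its reverse $\rev{y}$ is a prefix of $\rev{x}$, and since $x$ is a palindrome we have $\rev{x} = x$, hence $\rev{y}$ is a prefix of $x$. Substituting $\rev{y} = y$ gives that $y$ is a prefix of $x$; combined with the hypothesis that $y$ is a suffix of $x$, this makes $y$ a border.

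Since both arguments reduce to one application of the identity $\rev{x}=x$ and one application of the prefix/suffix duality under reversal, there is no genuine obstacle; the only point to watch is that the two prefixes of $x$ obtained in the forward direction really do have the same length ($|y|$), so equality is forced rather than merely inclusion.
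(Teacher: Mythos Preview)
Your proof is correct and complete; both directions are handled cleanly via the reversal duality and the palindrome identity $x=\rev{x}$, and the length-matching observation in the forward direction is exactly the right way to force $\rev{y}=y$.

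There is nothing to compare against in the paper itself: this lemma is stated without proof and attributed to~\cite{blondin2008palindromic}, with the surrounding text describing it as an ``easy lemma''. Your argument is the standard elementary proof of this fact and would be entirely appropriate to include.
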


\begin{lemm}[\cite{blondin2008palindromic}]
\label{lemm:pallongbor}
Let $x$ be a string with a border $y$ such that $|x|\le 2|y|$. Then
$x$ is a palindrome iff $y$ is a palindrome.
\end{lemm}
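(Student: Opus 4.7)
The plan is to split the lemma into its two directions, and handle the forward direction by a one-line appeal to Lemma~\ref{lemm:palbor}, reserving the length condition $|x|\le 2|y|$ for the reverse direction.

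For the forward direction, I would observe that $y$, being a border of $x$, is in particular a suffix of $x$. So if $x$ is a palindrome, then applying Lemma~\ref{lemm:palbor} to the suffix $y$ of $x$ and using the fact that $y$ is a border of $x$ gives immediately that $y$ is a palindrome. Note this direction does not use the length constraint at all.

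For the reverse direction, set $n=|x|$ and $m=|y|$, and write $x=x[1..n]$. Since $y$ is a border, $y=x[1..m]=x[n-m+1..n]$. The point of the hypothesis $n\le 2m$ is that the prefix copy and the suffix copy of $y$ together cover every position of $x$: for any index $i\in[1,n]$, at least one of $i$ and $n-i+1$ lies in $[1,m]$, because if both exceeded $m$ we would need $m+1\le i\le n-m$, forcing $n\ge 2m+1$, contradicting the hypothesis. So to verify that $x$ is a palindrome it suffices, by the symmetry $i\leftrightarrow n-i+1$, to check $x[i]=x[n-i+1]$ only for $i\le m$. For such $i$ we compute $x[i]=y[i]$ from the prefix copy, and $x[n-i+1]=y[(n-i+1)-(n-m)]=y[m-i+1]$ from the suffix copy; since $y$ is a palindrome, $y[i]=y[m-i+1]$, and the equality follows.

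The only real hazard is index bookkeeping, in particular making sure the position $n-i+1$ genuinely falls inside the suffix-copy range $[n-m+1,n]$ when $i\le m$ (it does, precisely because $n-i+1\ge n-m+1$), and verifying the covering claim in the boundary case $n=2m$. There is no deeper combinatorial obstacle: once the two overlapping copies of $y$ are laid down, the palindromic symmetry of $y$ transfers to $x$ automatically.
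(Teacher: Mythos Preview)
The paper does not actually prove this lemma: it is stated with a citation to~\cite{blondin2008palindromic} and used as a black box. So there is no ``paper's proof'' to compare against.

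Your argument is correct. The forward direction is indeed immediate from Lemma~\ref{lemm:palbor} and does not need the length bound. In the reverse direction your covering argument is the standard one and the index computation is right: for $i\le m$ you have $x[i]=y[i]$ from the prefix copy and $x[n-i+1]=y[m-i+1]$ from the suffix copy (since $n-i+1\ge n-m+1$), and palindromicity of $y$ closes the loop; for $i>m$ the hypothesis $n\le 2m$ forces $n-i+1\le m$, so symmetry reduces to the case already handled. The boundary case $n=2m$ poses no issue. Nothing is missing.
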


A positive integer $p\le |x|$ is a \emph{period} of a string $x$ 
if there exists a string $w$ of length $p$ such that $x$ is a factor
of $w^\infty$. It is well known that $y$ is a proper border of $x$ if
and only if $|x|-|y|$ is a period of $x$. This, together with
Lemma~\ref{lemm:palbor}, implies the following connection between periods
and palindromes.

\begin{lemm}
\label{lemm:palper}
Let $y$ be a proper suffix of a palindrome $x$. Then $|x|-|y|$ is a
period of $x$ iff $y$ is a palindrome. In particular, $|x|-|y|$ is the
smallest period of $x$ iff $y$ is the longest palindromic proper suffix
of $x$.
\end{lemm}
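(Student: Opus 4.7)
The plan is to reduce Lemma~\ref{lemm:palper} to the combination of Lemma~\ref{lemm:palbor} and the standard border--period duality already stated just above the lemma ($y$ is a proper border of $x$ iff $|x|-|y|$ is a period of $x$). Since $y$ is given as a proper suffix of $x$, to say that $y$ is a border of $x$ is the same as to say that $y$ is a proper border of $x$, so I can freely pass between the two formulations.

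For the first equivalence, I would argue both directions separately. Assume first that $|x|-|y|$ is a period of $x$. By the border--period duality, $y$ is a proper border of $x$, and then Lemma~\ref{lemm:palbor} (applied to the palindrome $x$ and its suffix $y$) tells us that $y$ is a palindrome. Conversely, if $y$ is a palindrome, then by Lemma~\ref{lemm:palbor} $y$ is a border of $x$; since $y$ is a proper suffix of $x$, this border is proper, and the duality then yields that $|x|-|y|$ is a period of $x$.

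For the ``in particular'' statement I would use the well-known fact that the smallest period of $x$ corresponds, under the duality, to the longest proper border of $x$. By the first part, among all proper suffixes of $x$ the palindromic ones are exactly the ones whose lengths correspond to periods of $x$. Hence the longest proper border of $x$ is the longest proper suffix of $x$ that is a palindrome, and the period $|x|-|y|$ is minimal iff the corresponding palindromic proper suffix $y$ is of maximal length.

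I do not expect any real obstacle: once we have Lemma~\ref{lemm:palbor} and the period--border duality, the proof is a direct chain of equivalences. The only point to be careful about is the distinction between ``border'' and ``proper border'', which is immediately handled because $y$ is assumed to be a proper suffix, so any border of $x$ of length $|y|$ is automatically proper.
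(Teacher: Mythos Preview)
Your proposal is correct and matches the paper's own approach exactly: the paper does not spell out a proof but simply says that the border--period duality ``together with Lemma~\ref{lemm:palbor}'' implies Lemma~\ref{lemm:palper}, which is precisely the chain of equivalences you give. Your careful remark about ``border'' versus ``proper border'' is the only detail needing attention, and you handle it correctly.
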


Now we are ready to state and prove the key combinatorial property of
palindromic suffixes.

\begin{lemm}
\label{lemm:gap-properties}
Let $x$ be a palindrome, $y$ the longest palindromic proper suffix of
$x$ and $z$ the longest palindromic proper suffix of $y$.
Let $u$ and $v$ be strings such that $x=uy$ and $y=vz$.
Then
\begin{enumerate}
\item[\rm(1)] $|u|\ge |v|$;
\item[\rm(2)] if $|u|>|v|$ then $|u|>|z|$;
\item[\rm(3)] if $|u|=|v|$ then $u=v$.
\end{enumerate}

\end{lemm}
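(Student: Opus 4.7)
The plan is to translate everything into statements about periods via Lemma \ref{lemm:palper}. That lemma gives me two key facts for free: $|u|=|x|-|y|$ is the smallest period of $x$, and $|v|=|y|-|z|$ is the smallest period of $y$. I would next observe that $z$ is a palindromic suffix of $x$ (since it is a suffix of $y$, which is a suffix of $x$), so by Lemma \ref{lemm:palbor} $z$ is a border of $x$; hence $|x|-|z|=|u|+|v|$ is a period of $x$. These three facts --- two smallest periods plus the extra period $|u|+|v|$ of $x$ --- are the main workhorses. I would also note, for use in part~(3), that $y$ is a border of $x$ and therefore a prefix of $x$.

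For part~(1) I would split on whether $|u|\ge|y|$ or $|u|<|y|$. In the first case, $|u|\ge|y|>|y|-|z|=|v|$ and we are done. In the second case $|u|\le|y|\le|x|$, so the period $|u|$ of $x$ restricts to a period of the suffix $y$; since $|v|$ is the \emph{smallest} period of $y$, we get $|v|\le|u|$.

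Part~(2) is where the real work lies and is the step I expect to be the main obstacle: it is the only place needing a nontrivial combinatorial principle, namely Fine--Wilf. Assume for contradiction that $|u|>|v|$ but $|u|\le|z|$. Then $x$ has two periods $p=|u|$ and $q=|u|+|v|$, and the string length $|x|=|u|+|v|+|z|$ satisfies $|x|\ge p+q-\gcd(p,q)$ because the required slack $|u|-\gcd(|u|,|v|)$ is at most $|u|\le|z|$. Fine--Wilf then gives that $\gcd(|u|,|u|+|v|)=\gcd(|u|,|v|)$ is a period of $x$. But $|u|$ is the smallest period of $x$, forcing $\gcd(|u|,|v|)\ge|u|$, i.e.\ $|u|$ divides $|v|$. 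Combined with $|v|\le|u|$ from part~(1) and $|v|\ge 1$, this yields $|v|=|u|$, contradicting $|u|>|v|$.

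Part~(3) is then immediate from the preliminary observation. Since $y$ is a border of the palindrome $x$, it is a prefix of $x$. The string $v$ is a prefix of $y$, hence also a prefix of $x$; and $u$ is a prefix of $x$ by definition. If $|u|=|v|$, then $u$ and $v$ are two prefixes of $x$ of the same length and must coincide.
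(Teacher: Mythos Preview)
Your argument is correct. Parts~(1) and~(3) are essentially identical to the paper's proof: both identify $|u|$ and $|v|$ as the smallest periods of $x$ and $y$ via Lemma~\ref{lemm:palper}, and both observe for~(3) that $u$ and $v$ are prefixes of $x$ of the same length. (One tiny cosmetic point: in your case $|u|\ge|y|$ for~(1) you write $|y|>|y|-|z|$, which needs $|z|\ge 1$; writing $\ge$ handles the degenerate case $z=\varepsilon$, though in the paper's intended application $z$ is always nonempty.)

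Part~(2) is where you genuinely diverge. The paper does not use Fine--Wilf. Instead it writes $x=vw$ (using that $v$, a prefix of $y$, is also a prefix of $x$), observes that $z$ is a border of $w$ with $|w|=|z|+|u|\le 2|z|$, and applies Lemma~\ref{lemm:pallongbor} to conclude that $w$ is a palindrome; since $|w|>|y|$, this contradicts the maximality of $y$. Your route---noting that $|u|$ and $|u|+|v|$ are both periods of $x$ and invoking Fine--Wilf to force $\gcd(|u|,|v|)\ge|u|$---is equally valid and arguably more systematic from a periodicity standpoint. The paper's approach has the advantage of being entirely self-contained within the three preceding lemmas, while yours imports a classical external tool; on the other hand, your argument avoids constructing the auxiliary string $w$ and makes the contradiction purely arithmetic.
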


\begin{figure}[t]
  \centering\small
  \begin{tikzpicture}[scale=0.40]
    \draw[dashed, fill=gray!10] (11,-2) rectangle (14,-1);
    \draw[dashed, fill=gray!10] (0,-1) rectangle (11,0);

    \draw[fill=gray!10] (0,0) rectangle (20,1);
    \draw[fill=gray!10] (11,-1) rectangle (20,0);
    \draw[fill=gray!10] (14,-2) rectangle (20,-1);

    \draw (10, 0.5) node {$x$} rectangle (10, 0.5);
    \draw (15.5, -0.5) node {$y$} rectangle (15.5, -0.5);
    \draw (5.5, -0.5) node {$u$} rectangle (5.5, -0.5);
    \draw (17, -1.5) node {$z$} rectangle (17, -1.5);
    \draw (12.5, -1.5) node {$v$} rectangle (12.5, -1.5);

    \draw[-] (0, 1) .. controls (6, 4) and (14, 4) .. (20, 1);
    \draw[-] (11, 1) .. controls (13.5, 2.5) and (17.5, 2.5) .. (20, 1);
    \draw[-] (14, 1) .. controls (15.5, 2) and (18.5, 2) .. (20, 1);
    
    \draw[-, dotted] (11, 0) -- (11, 1);
    \draw[-, dotted] (14, -1) -- (14, 1);
  \end{tikzpicture}
  \caption{Proof of Lemma~\ref{lemm:gap-properties}:
  $|u|\ge |v|$;
  if $|u|>|v|$ then $|u|>|z|$; and
  if $|u|=|v|$ then $u=v$.}
\label{fig:lemm-gap-properties}
\end{figure}
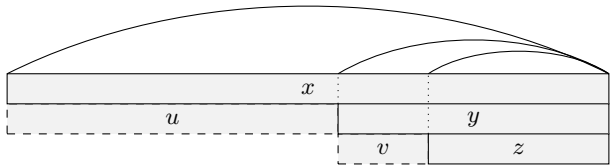

\begin{proof}

See Figure~\ref{fig:lemm-gap-properties} for
an illustration.

  (1) By Lemma~\ref{lemm:palper}, $|u|=|x|-|y|$ is the smallest period
  of $x$, and $|v|=|y|-|z|$ is the smallest period of $y$. Since $y$
  is a factor of $x$, either $|u|>|y|>|v|$ or $|u|$ is a period of $y$
  too, and thus it cannot be smaller than $|v|$.

(2) By Lemma~\ref{lemm:palbor}, $y$ is a border of $x$ and thus $v$ is
a prefix of $x$. Let $w$ be a string such that $x=vw$. Then $z$ is a
border of $w$ and $|w|=|zu|$, see Figure~\ref{fig:lemm-gap-properties2}.
Since we assume
$|u|>|v|$, we must have $|w|>|y|$. Suppose to the contrary that
$|u|\le |z|$. Then $|w|=|zu|\le 2|z|$, and by
Lemma~\ref{lemm:pallongbor}, $w$ is a palindrome. But this contradicts
$y$ being the longest palindromic proper suffix of $x$.

(3) In the proof of (2) we saw that $v$ is a prefix of $x$, and so is $u$ by
definition. Thus $u=v$ if $|u|=|v|$.
\end{proof}

\begin{figure}[t]
  \centering\small
  \begin{tikzpicture}[scale=0.40]
    \draw[dashed, fill=gray!10] (6,-2) rectangle (9,-1);
    \draw[dashed, fill=gray!10] (0,2) rectangle (3,3);
    \draw[dashed, fill=gray!10] (0,-1) rectangle (6,0);

    \draw[fill=gray!10] (0,0) rectangle (20,1);
    \draw[fill=gray!10] (0,1) rectangle (14,2);
    \draw[fill=gray!10] (3,2) rectangle (14,3);
    \draw[fill=gray!10] (3,3) rectangle (20,4);
    \draw[fill=gray!10] (6,-1) rectangle (20,0);
    \draw[fill=gray!10] (9,-2) rectangle (20,-1);

    \draw (10,0.5) node {$x$} rectangle (10,0.5);
    \draw (7,1.5) node {$y$} rectangle (7,1.5);
    \draw (8.5,2.5) node {$z$} rectangle (8.5,2.5);
    \draw (1.5,2.5) node {$v$} rectangle (1.5,2.5);
    \draw (11.5,3.5) node {$w$} rectangle (11.5,3.5);
    \draw (13,-0.5) node {$y$} rectangle (13,-0.5);
    \draw (3,-0.5) node {$u$} rectangle (3,-0.5);
    \draw (14.5,-1.5) node {$z$} rectangle (14.5,-1.5);
    \draw (7.5,-1.5) node {$v$} rectangle (7.5,-1.5);

    \draw[<->] (14,1.5) -- (20,1.5);
    \draw (17, 1.4) node[above] {$|u|$} rectangle (17, 1.4);
  \end{tikzpicture}
  \caption{Proof of Lemma~\ref{lemm:gap-properties}(2): if $|u|>|v|$ and
    $|u|\leq|z|$ then $w$ is a palindromic proper suffix of $x$ longer than $y$.}
\label{fig:lemm-gap-properties2}
\end{figure}
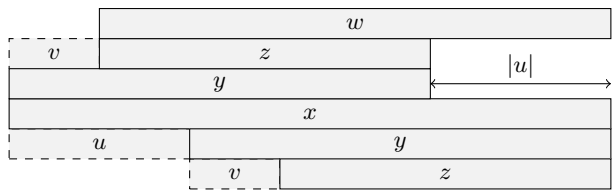

We will use the above lemma to establish the properties of the set $P_j$.
Let $P_j=\{p_1,p_2,\dots,p_m\}$ with $p_1<p_2<\dots<p_m$.  By
\textit{gap} we mean the difference $p_i-p_{i-1}$ of two consecutive
values in $P_j$.  The following result has been proven
in~\cite{MIISNH09} but we provide a proof for completeness.

\begin{lemm}
\label{corr:number-of-gaps}
The sequence of gaps in $P_j$ is non-increasing and there are
at most $\Oh{\log j}$ distinct gaps.
\end{lemm}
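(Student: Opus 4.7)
The plan is to identify each element $p_i \in P_j$ with the palindromic suffix $x_i = S[p_i..j]$, so that $x_1$ is the longest palindromic suffix of $S[1..j]$ and, in general, $x_{i+1}$ is a palindromic proper suffix of $x_i$. Since consecutive indices $p_i < p_{i+1}$ leave no room for an intermediate starting position, $x_{i+1}$ is in fact the \emph{longest} palindromic proper suffix of $x_i$; this is precisely the setup of Lemma~\ref{lemm:gap-properties}. The gap at position $i$ is $g_i := p_{i+1} - p_i = |x_i| - |x_{i+1}|$, which equals $|u|$ in the notation of Lemma~\ref{lemm:gap-properties} applied to the triple $(x_i, x_{i+1}, x_{i+2})$.

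For monotonicity, I would apply Lemma~\ref{lemm:gap-properties}(1) directly to the triple $(x_i, x_{i+1}, x_{i+2})$: it yields $g_i = |u| \ge |v| = g_{i+1}$, so the sequence of gaps is non-increasing.

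For the bound on the number of distinct gaps, I would use Lemma~\ref{lemm:gap-properties}(2). Whenever $g_i > g_{i+1}$ (a strict decrease), the lemma gives $g_i = |u| > |z| = |x_{i+2}|$. Combining this with $|x_{i+1}| = g_i + |x_{i+2}|$, we obtain $|x_{i+1}| > 2|x_{i+2}|$, so the length of the associated palindromic suffix more than halves at every strict decrease. Since the lengths $|x_i|$ are positive integers bounded above by $j$, there can be at most $\lfloor \log_2 j \rfloor + 1 = \Oh{\log j}$ strict decreases in the gap sequence, and therefore at most $\Oh{\log j}$ distinct gap values.

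The only subtle step is verifying that $x_{i+1}$ is the \emph{longest} palindromic proper suffix of $x_i$ (not merely \emph{some} palindromic proper suffix), which is what licenses the application of Lemma~\ref{lemm:gap-properties}; this follows immediately from the fact that $p_i$ and $p_{i+1}$ are consecutive in the sorted set $P_j$. Once that is observed, both claims fall out of parts (1) and (2) of Lemma~\ref{lemm:gap-properties} with essentially no further calculation.
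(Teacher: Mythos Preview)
Your approach is essentially identical to the paper's: both apply Lemma~\ref{lemm:gap-properties} to consecutive triples of palindromic suffixes, using part~(1) for monotonicity and part~(2) for the logarithmic bound.

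There is, however, an arithmetic slip in your halving step. You write $|x_{i+1}| = g_i + |x_{i+2}|$, but by your own definitions $g_i = |x_i| - |x_{i+1}|$ and $g_{i+1} = |x_{i+1}| - |x_{i+2}|$, so in fact $|x_{i+1}| = g_{i+1} + |x_{i+2}|$, and from $g_i > |x_{i+2}|$ alone you cannot conclude $|x_{i+1}| > 2|x_{i+2}|$. The correct inequality, and the one the paper uses, is
\[
|x_i| \;=\; g_i + |x_{i+1}| \;>\; g_i + |x_{i+2}| \;>\; 2|x_{i+2}|,
\]
so the palindromic-suffix length more than halves over \emph{two} steps (from $x_i$ to $x_{i+2}$) rather than one. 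This still bounds the number of strict gap decreases by $\Oh{\log j}$, so your conclusion is unaffected once the indexing is corrected.
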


\begin{proof}
For any $i\in[2..m-1]$, if we let
$x=S[p_{i-1}..j]$, $y=S[p_i..j]$ and $z=S[p_{i+1}..j]$, we have the
situation of Lemma~\ref{lemm:gap-properties} with gaps of $|u|$ and
$|v|$. 
The sequence of gaps is
non-increasing by Lemma~\ref{lemm:gap-properties}(1).
If we have a change of gap, i.e., $|u|>|v|$, 
we must have $|x| > |u|+|z| > 2|z|$ by
Lemma~\ref{lemm:gap-properties}(2), i.e., the length of the palindromic
suffix is halved in two steps. This cannot happen more than $\Oh{\log j}$
times. 
\end{proof}

We will partition the set $P_j$ by the gaps into $\Oh{\log j}$
consecutive subsets, each of which can be represented in constant
space since it forms an arithmetic progression. 
For any positive integer $\Delta$, we
define $P_{j,\Delta} = \{p_i : 1 < i \le m, p_i-p_{i-1}=\Delta\}$, and
$P_{j,\infty}=\{p_1\}$. Each non-empty $P_{j,\Delta}$ is represented by the
triple $(\min P_{j,\Delta}, \Delta, |P_{j,\Delta}|)$. Let $G_j$ be the
list of such triples in decreasing order of $\Delta$.

The list $G_j$ is a full representation of $P_j$ of size $\Oh{\log
  j}$. We will show that $G_j$ can be computed from
$G_{j-1}$ in $\Oh{|G_{j-1}|}$ time. In the quadratic-time algorithm,
each element $i$ of $P_{j-1}$ was either eliminated or replaced by $i-1$
in $P_j$. The following lemma shows that the decision to
eliminate or replace can be made simultaneously for all elements of a
partition $P_{j-1,\Delta}$. See Figure~\ref{fig:G-repr-example} for
an example.

\begin{figure}[t]
  \centering
  \subfloat[]{
    \hspace{-1.07cm}
    \small
    \begin{tikzpicture}[scale=0.49]
    \foreach \x/\ch in { 1/c, 2/a, 3/a, 4/a, 5/b, 6/a, 7/a, 8/a, 9/b, 10/a, 11/a, 12/a, 13/b, 14/a, 15/a, 16/a}
      \draw (\x-1,0) rectangle (\x,1);
    \foreach \x/\ch in { 1/c, 5/b, 9/b, 13/b, 14/a, 15/a}
      \draw[fill=gray!25] (\x-1,0) rectangle (\x,1);
    \foreach \x/\ch in { 1/c, 2/a, 3/a, 4/a, 5/b, 6/a, 7/a, 8/a, 9/b, 10/a, 11/a, 12/a, 13/b, 14/a, 15/a, 16/a}
      \draw (\x-0.5,0) node[above] {\normalsize {\ch}};
    \foreach \x in { 1, 2, 3, 4, 5, 6, 7, 8, 9, 10, 11, 12, 13, 14, 15, 16}
      \draw (\x-0.5,-0.75) node[above] {\scriptsize \textcolor{gray}{\x}};

    \draw[-] (1, 1) .. controls (5,4.2) and (12,4.2) .. (16, 1);
    \draw[-] (5, 1) .. controls (8,3.2) and (13,3.2) .. (16, 1);
    \draw[-] (9, 1) .. controls (10.5,2.4) and (14.5,2.4) .. (16, 1);
    \draw[-] (13, 1) .. controls (14, 1.8) and (15,1.8) .. (16, 1);
    \draw[-] (14, 1) .. controls (14.7, 1.5) and (15.3,1.5) .. (16, 1);
    \draw[-] (15, 1) .. controls (15.3, 1.3) and (15.7,1.3) .. (16, 1);

    \draw (-2.0, 0.5) node {$S[1..j-1]:$} (-2.0, 0.5);
    \draw (-1.4, -1.5) node {$G_{j-1}:$} (-1.4, -1.5);

    \draw[rounded corners] (1,-2) rectangle (2,-1.1);
    \draw[rounded corners] (5,-2) rectangle (13.95,-1.1);
    \draw[rounded corners] (14.05,-2) rectangle (16,-1.1);

    \foreach \x in { 2, 6, 10, 14, 15, 16} \draw (\x-0.5,-2) node[above] {\x};
    \draw (1.5, -3) node[above] {$P_{j-1,\infty}$} (1.5, -3);
    \draw (9.5, -3) node[above] {$P_{j-1,4}$} (9.5, -3);
    \draw (15, -3) node[above] {$P_{j-1,1}$} (15, -3);
  \end{tikzpicture}
  \label{fig:G-repr-example}
  }

  \subfloat[]{
    \small
    \begin{tikzpicture}[scale=0.49]
      \foreach \x/\ch in { 1/c, 2/a, 3/a, 4/a, 5/b, 6/a, 7/a, 8/a, 9/b, 10/a, 11/a, 12/a, 13/b, 14/a, 15/a, 16/a, 17/b}
        \draw (\x-1,0) rectangle (\x,1);

      \foreach \x/\ch in { 5/b, 9/b, 13/b }
        \draw[fill=gray!25] (\x-1,0) rectangle (\x,1);

      \foreach \x/\ch in { 1/c, 2/a, 3/a, 4/a, 5/b, 6/a, 7/a, 8/a, 9/b, 10/a, 11/a, 12/a, 13/b, 14/a, 15/a, 16/a, 17/b}
        \draw (\x-0.5,0) node[above] {\normalsize {\ch}};
      \foreach \x in { 1, 2, 3, 4, 5, 6, 7, 8, 9, 10, 11, 12, 13, 14, 15, 16, 17}
        \draw (\x-0.5,-0.75) node[above] {\scriptsize \textcolor{gray}{\x}};

      \draw[-] (4, 1) .. controls (8,3.2) and (13,3.2) .. (17, 1);
      \draw[-] (8, 1) .. controls (10,2.4) and (15,2.4) .. (17, 1);
      \draw[-] (12, 1) .. controls (13.5, 1.8) and (15.5,1.8) .. (17, 1);
      \draw[-] (16, 1) .. controls (16.3, 1.26) and (16.7,1.26) .. (17, 1);

      \draw (-1.5, 0.5) node {$S[1..j]:$} (-1.5, 0.5);
      \draw (-1.2, -1.5) node {$G_{j}':$} (-1.2, -1.5);
      \draw (-1.2, -2.5) node {$G_{j}'':$} (-1.2, -2.5);
      \draw (-1.2, -3.5) node {$G_{j}:$} (-1.2, -3.5);

      \draw[rounded corners] (4,-2) rectangle (12.95,-1.1);
      \draw[rounded corners] (4,-3) rectangle (5,-2.1);
      \draw[rounded corners] (8,-3) rectangle (12.95,-2.1);
      \draw[rounded corners] (16,-3) rectangle (17,-2.1);
      \draw[rounded corners] (4,-4) rectangle (5,-3.1);
      \draw[rounded corners] (8,-4) rectangle (17,-3.1);

      \foreach \x in { 5, 9, 13 } \draw (\x-0.5,-2) node[above] {\x};
      \foreach \x in { 5, 9, 13, 17} \draw (\x-0.5,-3) node[above] {\x};
      \foreach \x in { 5, 9, 13, 17} \draw (\x-0.5,-4) node[above] {\x};

      \draw (4.5, -5) node[above] {$P_{j,\infty}$} (4.5, -5);
      \draw (12.5, -5) node[above] {$P_{j,4}$} (12.5, -5);

    \end{tikzpicture}
    \label{fig:G-update-example}
  }
  \caption{ (a) The palindromic suffixes of $S[1..j-1]$ for $j=17$
    start at positions $P_{j-1}=\{2,6,10,14,15,16\}$ and the compact
    representation is 
    $G_{j-1}=((2, \infty, 1), (6, 4,
    3), (15, 1, 2))$.  The shaded symbols will be compared with the next
    symbol appended to the text.\newline  (b) The palindromic suffixes 
    after appending $S[j]$. The sequence $G_j'$ is obtained by taking
    each triple $(i,\Delta,k)\in G_{j-1}$ and either removing it or
    replacing it with $(i-1,\Delta,k)$. The resulting sequence
    $G_j'=((5,4,3))$, however, is no longer a valid gap partitioning
    because the gap of the first element encoded by triple $(5,4,3)$
    is $\infty$. This is fixed by separating this element into its own
    triple. At this point we also add the palindromes of length at
    most 2 to obtain
    $G_{j}''=((5,\infty,1),(9,4,2),(17,4,1))$. Finally, we merge
    neighboring triples with the same $\Delta$ to obtain
    $G_j=((5,\infty,1),(9,4,3))$.  }\label{fig:G-example}
\end{figure}
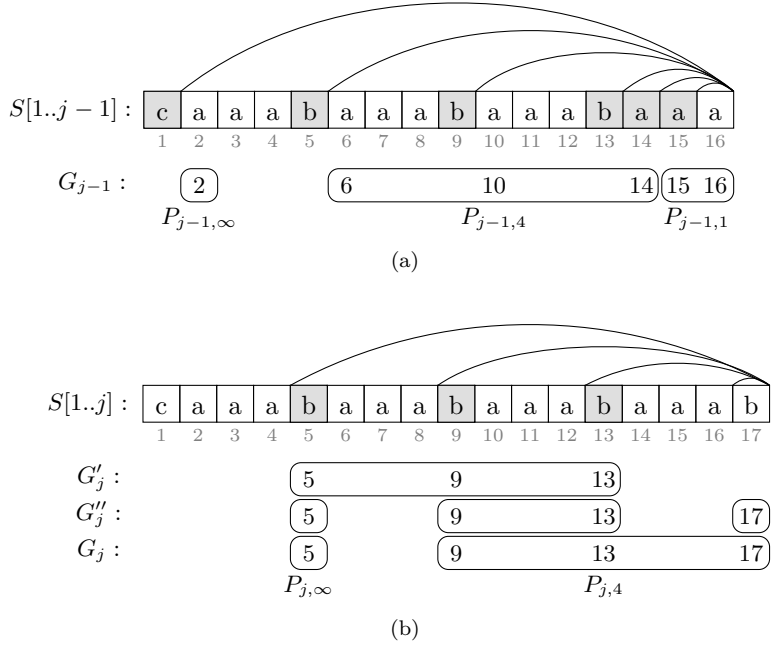

\begin{lemm}
\label{lemm:groups}
Let $p_i$ and $p_{i+1}$ be two consecutive elements of $P_{j-1,\Delta}$.
Then $p_i-1 \in P_{j}$ iff $p_{i+1}-1 \in P_{j}$.
\end{lemm}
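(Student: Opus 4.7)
The plan is to reduce the biconditional to a single character equality and then invoke Lemma~\ref{lemm:gap-properties}(3). Since $p_i,p_{i+1}\in P_{j-1}$, the strings $S[p_i..j-1]$ and $S[p_{i+1}..j-1]$ are already palindromes. Moreover $p_i\ge 2$ because it has a predecessor $p_{i-1}\in P_{j-1}$, and $p_i\le j-2$ because $p_{i+1}>p_i$ also lies in $P_{j-1}\subseteq[1,j-1]$; likewise $p_{i+1}\le j-1$. Hence both $S[p_i-1..j]$ and $S[p_{i+1}-1..j]$ have length at least three, and the palindrome-extension criterion already used in the quadratic algorithm (a string $S[k..j]$ of length at least three is a palindrome iff $S[k+1..j-1]$ is a palindrome and $S[k]=S[j]$) collapses the two membership questions to $S[p_i-1]=S[j]$ and $S[p_{i+1}-1]=S[j]$ respectively. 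So it suffices to prove the single equality $S[p_i-1]=S[p_{i+1}-1]$.

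For this I would apply Lemma~\ref{lemm:gap-properties} to $x=S[p_{i-1}..j-1]$, $y=S[p_i..j-1]$ and $z=S[p_{i+1}..j-1]$. Because $p_{i-1},p_i,p_{i+1}$ are three consecutive indices in the sorted list $P_{j-1}$, Lemma~\ref{lemm:palbor} ensures that $y$ is the longest palindromic proper suffix of $x$ and $z$ is the longest palindromic proper suffix of $y$, so the hypothesis of Lemma~\ref{lemm:gap-properties} is met. Writing $x=uy$ and $y=vz$, the grouping $p_i,p_{i+1}\in P_{j-1,\Delta}$ forces $|u|=p_i-p_{i-1}=\Delta=p_{i+1}-p_i=|v|$, so part~(3) of Lemma~\ref{lemm:gap-properties} gives $u=v$. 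Comparing the last characters of these equal length-$\Delta$ strings produces exactly $S[p_i-1]=S[p_{i+1}-1]$, completing the argument.

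The only delicate step is verifying the applicability of Lemma~\ref{lemm:gap-properties}, in particular that \emph{consecutive} indices in $P_{j-1}$ correspond to \emph{successive} (longest-to-shortest) palindromic suffixes. This follows from Lemma~\ref{lemm:palbor}: each palindromic suffix of $S[1..j-1]$ is a border, hence a suffix, of every longer palindromic suffix of $S[1..j-1]$, so the nesting order of these palindromic suffixes agrees with the sorted order of their starting positions in $P_{j-1}$.
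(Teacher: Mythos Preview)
Your proof is correct and follows essentially the same route as the paper: reduce both sides to the character comparisons $S[p_i-1]=S[j]$ and $S[p_{i+1}-1]=S[j]$, then apply Lemma~\ref{lemm:gap-properties}(3) with $x=S[p_{i-1}..j-1]$, $y=S[p_i..j-1]$, $z=S[p_{i+1}..j-1]$ and the equal gaps $|u|=|v|=\Delta$ to conclude $S[p_i-1]=S[p_{i+1}-1]$. You spell out a couple of details (length $\ge 3$ for the extension step, and why consecutive elements of $P_{j-1}$ give successive longest palindromic proper suffixes) that the paper leaves implicit by pointing back to the proof of Lemma~\ref{corr:number-of-gaps}, but the argument is the same.
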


\begin{proof}
By definition, $p_{i+1}-p_i=\Delta$, and the predecessor of $p_i$ in
$P_j$ is $p_{i-1}=p_i-\Delta$. Using the definitions from the proof
of Lemma~\ref{corr:number-of-gaps}, we have the situation of
Lemma~\ref{lemm:gap-properties}(3), which implies that 
$S[p_i-1]=S[p_{i+1}-1]=c$. Thus, $p_i-1\in P_j$ iff $S[j]=c$ iff
$p_{i+1}-1\in P_j$.
\end{proof}

Thus, when computing $G_j$, each triple $(i,\Delta,k)\in G_{j-1}$ will
be either eliminated or replaced by $(i-1,\Delta,k)$.
The resulting sequence of triples is
\[
G'_j = \{ (i-1,\Delta,k) : (i,\Delta,k) \in G_{j-1}, i>1, \text{ and }
S[i-1] = S[j] \}\,,
\]
which is a full representation of all palindromes longer than two in
$P_j$. 

However, the triples in $G'_j$ may no longer perfectly correspond to
the partitions $P_{j,\Delta}$ because the gaps may have
changed. Specifically, if the smallest element $p_i$ in
$P_{j-1,\Delta}$ is replaced by $p_i-1$ but its predecessor
$p_{i-1}=p_i-\Delta$ in $P_{j-1}$ is eliminated, then $p_i-1$ is not
in $P_{j,\Delta}$ but it is, at this point, represented by the triple
$(p_i-1,\Delta,k)$. Note that only the smallest element of each
partition can be affected by this. In such cases, we separate the
first element into its own triple, i.e., we replace $(p_i-1,\Delta,k)$
with $(p_i-1,\Delta',1)$ and (if $k>1$) $(p_i-1+\Delta,\Delta,k-1)$,
where $\Delta'$ is the new gap preceding $p_i-1$ in $P_j$. We will
also add separate triples to represent palindromes of lengths one and
(possibly) two.

Let $G''_j$ be the sequence of triples obtained from $G'_j$ by
the above process (see lines 8--21 in Figure~\ref{fig-algorithm}). 
It represents exactly the palindromes in $P_j$ and the $\Delta$-values
are now correct, but there may be multiple triples with the same
$\Delta$. Thus we obtain the final sequence $G_j$
from $G''_j$ by merging triples with the same $\Delta$.

The full procedure for computing $G_j$ from $G_{j-1}$ is shown on
lines~4--30 in Figure~\ref{fig-algorithm} and the example of computation
is given in Figure~\ref{fig:G-update-example}. Each triple is processed in
constant time and the number of triples never exceeds
$\Oh{|G_{j-1}|}$. 

\begin{lemm}
$G_{j}$ can be computed from $G_{j-1}$ in $\Oh{|G_{j-1}|}=\Oh{\log j}$ time.
\end{lemm}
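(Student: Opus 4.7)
The plan is to walk through the three-stage transformation $G_{j-1} \to G'_j \to G''_j \to G_j$ described in the lead-up to the lemma, verifying that each stage runs in time $\Oh{|G_{j-1}|}$, and then to invoke Lemma~\ref{corr:number-of-gaps} to convert this into $\Oh{\log j}$.

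First I would argue that $G'_j$ is obtainable in $\Oh{|G_{j-1}|}$ time. By Lemma~\ref{lemm:groups}, for each triple $(i,\Delta,k)\in G_{j-1}$ a single character comparison $S[i-1] \stackrel{?}{=} S[j]$ determines the fate of all $k$ represented positions at once: either all shift to $(i-1,\Delta,k)$ or all disappear. Hence one pass through $G_{j-1}$ doing $\Oh{1}$ work per triple produces $G'_j$, with $|G'_j|\le |G_{j-1}|$.

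Second, I would bound the conversion from $G'_j$ to $G''_j$. The only triple whose stored $\Delta$ can become stale is one whose minimum element had its predecessor in $P_{j-1}$ deleted; this issue is confined to the smallest element of each triple because Lemma~\ref{lemm:groups} guarantees uniform behaviour within a partition. For each such triple $(p_i-1,\Delta,k)$ we peel off a head $(p_i-1,\Delta',1)$, where the new gap $\Delta'$ is read in $\Oh{1}$ from the last position of the preceding triple in $G'_j$ (or is $\infty$ if there is none), and, if $k>1$, we keep the tail $(p_i-1+\Delta,\Delta,k-1)$. We also prepend at most two triples to record the palindromic suffixes of length one and two. These operations yield $|G''_j| \le 2|G_{j-1}| + 2$ and cost $\Oh{|G_{j-1}|}$ in total.

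Third, I would derive $G_j$ from $G''_j$ via a single left-to-right scan that merges adjacent triples sharing a common $\Delta$; this pass costs $\Oh{|G''_j|} = \Oh{|G_{j-1}|}$. Since gaps in $P_j$ are non-increasing by Lemma~\ref{corr:number-of-gaps}, the triples of $G''_j$ are already sorted by decreasing $\Delta$, so the merge produces exactly the specified $G_j$. The main obstacle I anticipate is the case analysis in the middle stage: one must verify that no triple other than the head of a shifted group can have its $\Delta$ invalidated, and that the correct $\Delta'$ is locally recoverable. Once that is settled, applying Lemma~\ref{corr:number-of-gaps} upgrades $\Oh{|G_{j-1}|}$ to the claimed $\Oh{\log j}$ bound.
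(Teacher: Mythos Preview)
Your proposal is correct and follows exactly the three-stage transformation $G_{j-1}\to G'_j\to G''_j\to G_j$ that the paper lays out in the paragraphs preceding the lemma and in lines~4--30 of Figure~\ref{fig-algorithm}; the paper's own justification is the terse sentence ``each triple is processed in constant time and the number of triples never exceeds $\Oh{|G_{j-1}|}$'', and your per-stage bounds (in particular $|G''_j|\le 2|G_{j-1}|+2$) are precisely the intended unpacking of that claim. One cosmetic slip: the length-one and length-two suffix triples are \emph{appended} (they have the largest starting positions in $P_j$), not prepended, but this does not affect the time analysis.
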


\section{Faster Factorization}
\label{sec:factorization}

In this section, we will show how to compute $\PL[j]$ from
$\PL[0..j-1]$ and $G_j$ in $\Oh{|G_j|}$ time. The key to fast computation
of $G_j$ was the close relation between $P_{j,\Delta}$ and
$P_{j-1,\Delta}$. Now we will rely on the relation between
$P_{j,\Delta}$ and $P_{j-\Delta,\Delta}$ captured by the following
result.

\begin{lemm}
If $(i,\Delta,k)\in G_j$ for $k \geq 2$, then $(i,\Delta,k-1)\in
G_{j-\Delta}$.
\label{lemm:gpl-correctness}
\end{lemm}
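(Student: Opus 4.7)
The plan is to exploit the periodicity implicit in the triple $(i,\Delta,k)$. Since $i$ is the smallest element of $P_{j,\Delta}$, its immediate predecessor in $P_j$ is at gap $\Delta$, so $i-\Delta\in P_j$; together with $i,i+\Delta,\ldots,i+(k-1)\Delta\in P_j$ this yields a chain of nested palindromic suffixes of $S[1..j]$. Applying Lemma~\ref{lemm:palbor} pairwise, each shorter palindrome is a border of the next longer one, so $\Delta$ is a period of the longest of them, $S[i-\Delta..j]$.

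The core of the argument is the bijection this period induces: for $p\in[i-\Delta,j-\Delta]$ periodicity gives $S[p..j-\Delta]=S[p+\Delta..j]$, so $p\in P_{j-\Delta}$ iff $p+\Delta\in P_j$. Applied to the arithmetic progression $\{i,i+\Delta,\ldots,i+(k-1)\Delta\}\subseteq P_{j,\Delta}$, this immediately shows $\{i-\Delta,i,\ldots,i+(k-2)\Delta\}\subseteq P_{j-\Delta}$; moreover, since the intervals between these positions contain no elements of $P_j$, the corresponding intervals in $P_{j-\Delta}$ are also empty, so the consecutive gaps between these positions are exactly $\Delta$. On the right side, any further element of $P_{j-\Delta}$ above $i+(k-2)\Delta$ would correspond via the bijection to a successor of $i+(k-1)\Delta$ in $P_j$; by Lemma~\ref{corr:number-of-gaps}, the gaps after $i+(k-1)\Delta$ in $P_j$ are strictly smaller than $\Delta$, so such an element cannot lie in $P_{j-\Delta,\Delta}$.

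The main obstacle, which I expect to take most of the work, is ruling out elements of $P_{j-\Delta,\Delta}$ below $i$. It suffices to prove $i-2\Delta\notin P_{j-\Delta}$: the non-increasing-gap property of Lemma~\ref{corr:number-of-gaps} then forces the predecessor-gap of $i-\Delta$ in $P_{j-\Delta}$ to exceed $\Delta$ and propagates this bound backwards, so no element smaller than $i$ can have predecessor-gap $\Delta$. I plan to establish $i-2\Delta\notin P_{j-\Delta}$ by contradiction. Assuming $S[i-2\Delta..j-\Delta]$ is a palindrome, the palindromic suffix $S[i-\Delta..j-\Delta]$ is a border (Lemma~\ref{lemm:palbor}), so $\Delta$ is a period of $S[i-2\Delta..j-\Delta]$; unioning this period range with the existing one for $S[i-\Delta..j]$ shows that $\Delta$ is in fact a period of all of $S[i-2\Delta..j]$. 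A short direct verification of the palindrome condition for $S[i-2\Delta..j]$, using this period together with the known palindromes $S[i-2\Delta..j-\Delta]$ and $S[i-\Delta..j]$, then forces $S[i-2\Delta..j]$ itself to be a palindrome, i.e., $i-2\Delta\in P_j$. But then $i-\Delta$ would have predecessor $i-2\Delta$ at gap $\Delta$ in $P_j$, placing $i-\Delta$ in $P_{j,\Delta}$ and contradicting the minimality of $i$ there.

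Assembling these pieces yields $P_{j-\Delta,\Delta}=\{i,i+\Delta,\ldots,i+(k-2)\Delta\}$, which is exactly the content of the triple $(i,\Delta,k-1)\in G_{j-\Delta}$.
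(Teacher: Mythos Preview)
Your proposal is correct and follows essentially the same route as the paper's proof. Both arguments hinge on the same two steps: (i) the equality $S[i-\Delta..j-\Delta]=S[i..j]$, which you phrase as ``$\Delta$ is a period of $S[i-\Delta..j]$'' and the paper phrases as ``$y$ is a border of $x$'', gives a bijection $p\leftrightarrow p+\Delta$ between palindromic suffixes ending at $j-\Delta$ and at $j$ in the relevant range, transferring the gap structure; and (ii) the exclusion $i-2\Delta\notin P_{j-\Delta}$ is proved by contradiction by showing $S[i-2\Delta..j]$ would be a palindrome, where you argue via extended periodicity and the paper via the explicit decomposition $S[i-2\Delta..j]=wz\rev{w}$. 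These are the same ideas in slightly different language.
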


\begin{proof}

By definition, $(i,\Delta,k)\in G_j$ is equivalent to saying that
$P_{j,\Delta}=\{i,i+\Delta,\dots,i+(k-1)\Delta\}$, and we need to show
that $P_{j-\Delta,\Delta}=\{i,i+\Delta,\dots,i+(k-2)\Delta\}$. We will
show first that
$P_{j-\Delta,\Delta}\cap[i-\Delta+1..j-\Delta]=\{i,i+\Delta,\dots,i+(k-2)\Delta\}$
and then that $P_{j-\Delta,\Delta}\cap[1..i-\Delta]=\emptyset$.

Since $y=S[i..j]$ and $x=S[i-\Delta..j]$ are palindromes and $y$ is
the longest proper border of $x$,
$S[i-\Delta..j-\Delta]=y=S[i..j]$. Thus for all $\ell\in[i..j]$,
$\ell\in P_j$ iff $\ell-\Delta\in P_{j-\Delta}$ (see
Figure~\ref{fig:gpl-correctness-1}). In particular, the
gaps in both cases are the same and for all
$\ell\in[i+1..j]$, $\ell\in P_{j,\Delta}$ iff $\ell-\Delta\in
P_{j-\Delta,\Delta}$. Thus $P_{j-\Delta,\Delta}\cap[i-\Delta+1..j-\Delta] =
\{i,i+\Delta,\dots,i+(k-2)\Delta\}$.

We still need to show that
$P_{j-\Delta,\Delta}\cap[1..i-\Delta]=\emptyset$, which is true if and
only if $i-2\Delta\not\in P_{j-\Delta}$. Suppose to the contrary that
$S[i-2\Delta..j-\Delta]$ is a palindrome and let
$w=S[i-2\Delta..i-\Delta-1]$. Then $S[j-2\Delta+1..j-\Delta]=\rev{w}$,
the reverse of $w$.
Since $z=S[i-\Delta..j-\Delta]$ and $S[i-\Delta..j]$ are palindromes
too, we have that $S[i-\Delta..i-1]=w$ and $S[j-\Delta+1..j]=\rev{w}$.
Finally, since $z$ is a palindrome, $S[i-2\Delta..j]=wz\rev{w}$ is a
palindrome (see Figure~\ref{fig:gpl-correctness-2}). This implies that
$i-2\Delta\in P_{j}$ and thus $i-\Delta\in P_{j,\Delta}$, which
is a contradiction.
\end{proof}

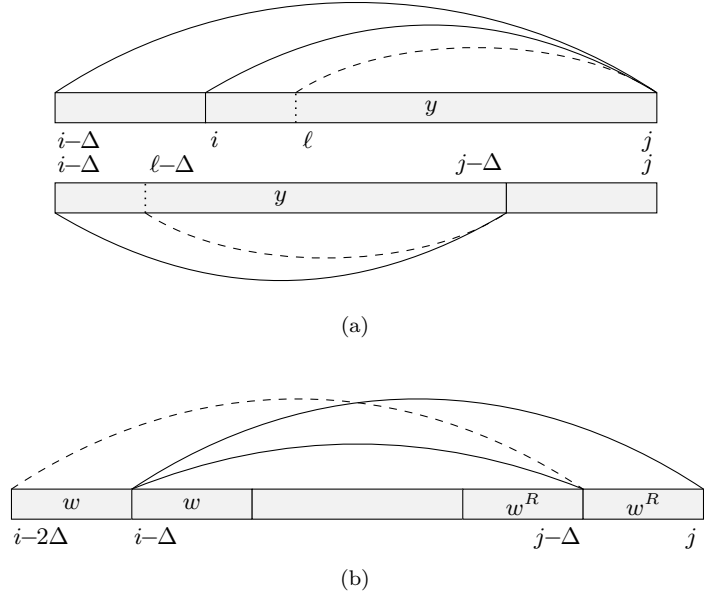
\begin{figure}[t]
  \centering
  \subfloat[]{
  \centering\small
  \begin{tikzpicture}[scale=0.40]
    \draw[fill=gray!10] (0, 2) rectangle (5, 3);
    \draw[fill=gray!10] (5, 2) rectangle (20, 3);
    \draw[fill=gray!10] (0, -1) rectangle (15, 0);
    \draw[fill=gray!10] (15, -1) rectangle (20, 0);

    \draw (12.5, 2.5) node {$y$} rectangle (12.5, 2.5);
    \draw (0.8, 2) node[below] {$i\hspace{-0.1cm}-\hspace{-0.1cm}\Delta$} rectangle (0.8, 2);
    \draw (0.8, 0) node[above] {$i\hspace{-0.1cm}-\hspace{-0.1cm}\Delta$} rectangle (0.8, 0);
    \draw (14.1, 0) node[above] {$j\hspace{-0.1cm}-\hspace{-0.1cm}\Delta$} rectangle (14.1, 0);
    \draw (3.9, 0) node[above] {$\ell\hspace{-0.1cm}-\hspace{-0.1cm}\Delta$} rectangle (3.9, 0);
    \draw (5.3, 2) node[below] {$i$} rectangle (5.3, 2);
    \draw (19.7, 2) node[below] {$j$} rectangle (19.7, 2);
    \draw (19.7, 0) node[above] {$j$} rectangle (19.7, 0);
    \draw (8.4, 2) node[below] {$\ell$} rectangle (8.4, 2);
    \draw (7.5, -0.5) node {$y$} rectangle (7.5, -0.5);

    \draw[-, dotted, semithick] (8.0, 2) -- (8.0, 3);
    \draw[-, dotted, semithick] (3.0, -1) -- (3.0, 0);

    \draw[-] (0, 3) .. controls (6,7) and (14,7) .. (20, 3);
    \draw[-] (5, 3) .. controls (10,6) and (15,6) .. (20, 3);
    \draw[-, dashed] (8, 3) .. controls (11,5) and (17,5) .. (20, 3);
    \draw[-, dashed] (3, -1) .. controls (6, -3) and (12, -3) .. (15, -1);
    \draw[-] (0, -1) .. controls (5, -4) and (10, -4) .. (15, -1);
  \end{tikzpicture}
  \label{fig:gpl-correctness-1}
  }

  \subfloat[]{
  \centering\small
  \begin{tikzpicture}[scale=0.40]
    \draw[fill=gray!10] (0,0) rectangle (4,1);
    \draw[fill=gray!10] (4,0) rectangle (8,1);
    \draw[fill=gray!10] (8,0) rectangle (15,1);
    \draw[fill=gray!10] (15,0) rectangle (19,1);
    \draw[fill=gray!10] (19,0) rectangle (23,1);

    \draw (2,0.5) node {$w$} rectangle (2,0.5);
    \draw (6,0.5) node {$w$} rectangle (6,0.5);
    \draw (17,0.5) node {$w^{R}$} rectangle (17,0.5);

    \draw (21,0.5) node {$w^{R}$} rectangle (21,0.5);

    \draw (22.6,0) node[below] {$j$} rectangle (22.6,0);
    \draw (18.15,0) node[below] {$j\hspace{-0.1cm}-\hspace{-0.1cm}\Delta$} rectangle (18.15,0);
    \draw (4.8,0) node[below] {$i\hspace{-0.1cm}-\hspace{-0.1cm}\Delta$} rectangle (4.8,0);
    \draw (1,0) node[below] {$i\hspace{-0.1cm}-\hspace{-0.1cm}2\Delta$} rectangle (1,0);

    \draw[-] (4, 1) .. controls (9,3) and (14,3) .. (19, 1);
    \draw[-] (4, 1) .. controls (10,5) and (17,5) .. (23, 1);
    \draw[-,dashed] (0, 1) .. controls (6,5) and (13,5) .. (19, 1);
  \end{tikzpicture}
  \label{fig:gpl-correctness-2}
  }
  \caption{Proof of Lemma~\ref{lemm:gpl-correctness}. (a) $\ell\in P_j$ iff
    $\ell-\Delta\in P_{j-\Delta}$ for all $\ell\in[i..j]$. (b)
    If $i-2\Delta\in P_{j-\Delta}$
    then $S[i-2\Delta..j]$ is a palindrome.}
\end{figure}

\begin{figure}[th!]
  \centering
  \subfloat[Iteration $j$.]{
    \hspace{-1.5cm}
    \small
    \begin{tikzpicture}[scale=0.49]
    \newcommand{\minbox}[2]{\draw (#1, #2) ellipse (0.8cm and 0.5cm);\draw (#1, #2+0.05) node {$\min$};}

    \foreach \x/\ch in { 1/c, 2/a, 3/a, 4/a, 5/b, 6/a, 7/a, 8/a, 9/b, 10/a, 11/a, 12/a, 13/b, 14/a, 15/a, 16/a}
      \draw (\x-1,0) rectangle (\x,1);
    \foreach \x/\ch in { 1/c, 2/a, 3/a, 4/a, 5/b, 6/a, 7/a, 8/a, 9/b, 10/a, 11/a, 12/a, 13/b, 14/a, 15/a, 16/a}
      \draw (\x-0.5,0) node[above] {\normalsize {\ch}};
    \foreach \x in {0, 1, 2, 3, 4, 5, 6, 7, 8, 9, 10, 11, 12, 13, 14, 15, 16}
      \draw (\x-0.5,-3.25) node[above] {\scriptsize \textcolor{gray}{\x}};

    \draw[-] (1, 1) .. controls (5,4.2) and (12,4.2) .. (16, 1);
    \draw[-] (5, 1) .. controls (8,3.2) and (13,3.2) .. (16, 1);
    \draw[-] (9, 1) .. controls (10.5,2.4) and (14.5,2.4) .. (16, 1);
    \draw[-] (13, 1) .. controls (14, 1.8) and (15,1.8) .. (16, 1);
    \draw[-] (14, 1) .. controls (14.7, 1.5) and (15.3,1.5) .. (16, 1);
    \draw[-] (15, 1) .. controls (15.3, 1.3) and (15.7,1.3) .. (16, 1);

    \draw (-1, 0.5) node[left] {$S[1..j]:$} (-1, 0.5);
    \draw (-1, -0.5) node[left] {$G_{j}:$} (-1, -0.5);
    \draw (-1, -3.75) node[left] {$\PL:$} (-1, -3.75);

    \draw[rounded corners] (1,-1) rectangle (2,-0.1);
    \draw[rounded corners] (5,-1) rectangle (13.95,-0.1);
    \draw[rounded corners] (14.05,-1) rectangle (16,-0.1);

    \foreach \x in { 2, 6, 10, 14, 15, 16} \draw (\x-0.5,-1) node[above] {\x};
    \draw (1.5, -2) node[above] {$P_{j,\infty}$};
    \draw (9.5, -2) node[above] {$P_{j,4}$};
    \draw (15, -2) node[above] {$P_{j,1}$};

    \foreach \x in {0, 1, 2, 3, 4, 5, 6, 7, 8, 9, 10, 11, 12, 13, 14, 15, 16}
      \draw (\x-1,-4.25) rectangle (\x,-3.25);
    \foreach \x/\ch in { 5, 9, 13}
      \draw[fill=gray!30] (\x-1, -4.25) rectangle (\x,-3.25);
    \foreach \x/\ch in {0/0, 1/1, 2/2, 3/2, 4/2, 5/3, 6/3, 7/3, 8/2, 9/3, 10/3, 11/3, 12/2, 13/3, 14/3, 15/3, 16/2}
      \draw (\x-0.5,-4.25) node[above] {\normalsize {\ch}};

    \draw[->] (4.3,-4.25) -- (4.3,-6);
    \draw (4.4,-4.65) node[left] {\scriptsize +1};
    \draw (8.5,-4.65) node[left] {\scriptsize +1};
    \draw (12.5,-4.65) node[left] {\scriptsize +1};
    \draw[->, rounded corners=0.6mm] (8.5, -4.25) -- (8.5, -5) -- (4.5, -5) -- (4.5, -6);

    \draw[->, rounded corners=0.6mm] (12.5, -4.25) -- (12.5, -5.2) -- (4.7, -5.2) -- (4.7, -6);

    \minbox{4.5}{-6.5};

    \draw[->] (13.4,-4.25) -- (13.4,-6);
    \draw (13.55,-4.65) node[left] {\scriptsize $+1$};
    \draw (13.5,-4.65) node[right] {\scriptsize +1};
    \draw[->, rounded corners=0.6mm] (14.5, -4.25) -- (14.5, -5) -- (13.6, -5) -- (13.6, -6);
    \minbox{13.5}{-6.5};

    \draw[->,rounded corners=0.6mm] (0.2,-4.25) -- (0.2,-10.7) -- (15.7, -10.7) -- (15.7, -6);
    \draw (0.3, -4.65) node[left] {\scriptsize {+1}};

    \minbox{15.5}{-5.5};
    \draw[->] (15.5,-5) -- (15.5,-4.25);

    \foreach \x in {2, 14}
      \draw (\x-1,-9) rectangle (\x,-8);
    \draw[-] (0.5, -8) -- (2.5, -8);
    \draw[-] (0.5, -9) -- (2.5, -9);
    \draw[-] (12.5, -8) -- (14.5, -8);
    \draw[-] (12.5, -9) -- (14.5, -9);
    \foreach \x/\ch in { 1.1, 2.9, 13.1, 14.9}
      \draw (\x-0.5,-8.75) node[above] {\normalsize {..}};

    \foreach \x/\ch in { 2/4, 14/4 }
      \draw (\x-0.5,-9) node[above] {\normalsize {\ch}};

    \foreach \x in {2, 14}
      \draw (\x-1,-9) rectangle (\x,-8);

    \draw[->] (13.2,-7) -- (13.2,-8);
    \draw[->, rounded corners=0.6mm] (4.5,-7) -- (4.5, -7.5) -- (1.8, -7.5) -- (1.8,-8);
    \draw (13.5, -8) node[above] {\scriptsize \textcolor{gray}{14}};
    \draw (1.5, -8) node[above] {\scriptsize \textcolor{gray}{2}};

    \draw (1.6, -9) node[below] {$\PL_{j,4}$};
    \draw (13.5, -9) node[below] {$\PL_{j,1}$};

    \fill[fill=white] (0, -7) rectangle (1, -6);
    \draw (0.2, -6.5) node {$\PL_{j,\infty}=2$};

    \draw (-1, -8.5) node[left] {$\GPL:$} (-1, -8.5);
    \draw[->, rounded corners=0.6mm] (13.5,-10) -- (13.5, -10.3) -- (15.3, -10.3) -- (15.3,-6);
    \draw[->, rounded corners=0.6mm] (1.5,-10) -- (1.5, -10.5) -- (15.5, -10.5) -- (15.5,-6);
  \end{tikzpicture}
  }

  \subfloat[Iteration $j-4$.]{
    \hspace{-2.0cm}
    \small
    \begin{tikzpicture}[scale=0.49]
    \newcommand{\minbox}[2]{\draw (#1, #2) ellipse (0.8cm and 0.5cm);\draw (#1, #2+0.05) node {$\min$};}

    \foreach \x/\ch in { 1/c, 2/a, 3/a, 4/a, 5/b, 6/a, 7/a, 8/a, 9/b, 10/a, 11/a, 12/a}
      \draw (\x-1,0) rectangle (\x,1);
    \foreach \x/\ch in { 1/c, 2/a, 3/a, 4/a, 5/b, 6/a, 7/a, 8/a, 9/b, 10/a, 11/a, 12/a}
      \draw (\x-0.5,0) node[above] {\normalsize {\ch}};
    \foreach \x in {0, 1, 2, 3, 4, 5, 6, 7, 8, 9, 10, 11, 12}
      \draw (\x-0.5,-3.25) node[above] {\scriptsize \textcolor{gray}{\x}};

    \draw[-] (1, 1) .. controls (4,3.2) and (9,3.2) .. (12, 1);
    \draw[-] (5, 1) .. controls (6.5,2.4) and (10.5,2.4) .. (12, 1);
    \draw[-] (9, 1) .. controls (10, 1.8) and (11,1.8) .. (12, 1);
    \draw[-] (10, 1) .. controls (10.7, 1.5) and (11.3,1.5) .. (12, 1);
    \draw[-] (11, 1) .. controls (11.3, 1.3) and (11.7,1.3) .. (12, 1);

    \draw (-1, 0.5) node[left] {$S[1..j-4]:$};
    \draw (-1, -0.5) node[left] {$G_{j-4}:$};
    \draw (-1, -3.75) node[left] {$\PL:$};

    \draw[rounded corners] (1,-1) rectangle (2,-0.1);
    \draw[rounded corners] (5,-1) rectangle (9.95,-0.1);
    \draw[rounded corners] (10.05,-1) rectangle (12,-0.1);

    \foreach \x in { 2, 6, 10, 11, 12} \draw (\x-0.5,-1) node[above] {\x};
    \draw (1.5, -2) node[above] {$P_{j-4,\infty}$};
    \draw (7.5, -2) node[above] {$P_{j-4,4}$};
    \draw (11, -2) node[above] {$P_{j-4,1}$};

    \foreach \x/\ch in { 5, 9 }
      \draw[fill=gray!30] (\x-1, -4.25) rectangle (\x,-3.25);
    \foreach \x in {0, 1, 2, 3, 4, 5, 6, 7, 8, 9, 10, 11, 12}
      \draw (\x-1,-4.25) rectangle (\x,-3.25);
    \foreach \x/\ch in {0/0, 1/1, 2/2, 3/2, 4/2, 5/3, 6/3, 7/3, 8/2, 9/3, 10/3, 11/3, 12/2}
      \draw (\x-0.5,-4.25) node[above] {\normalsize {\ch}};

    \draw[->] (4.4,-4.25) -- (4.4,-6);
    \draw (4.5,-4.65) node[left] {\scriptsize +1};
    \draw (8.5,-4.65) node[left] {\scriptsize +1};
    \draw[->, rounded corners=0.6mm] (8.5, -4.25) -- (8.5, -5) -- (4.6, -5) -- (4.6, -6);
    \minbox{4.5}{-6.5};

    \draw[->] (9.4,-4.25) -- (9.4,-6);
    \draw (9.55,-4.65) node[left] {\scriptsize $+1$};
    \draw (9.5,-4.65) node[right] {\scriptsize +1};
    \draw[->, rounded corners=0.6mm] (10.5, -4.25) -- (10.5, -5) -- (9.6, -5) -- (9.6, -6);
    \minbox{9.5}{-6.5};

    \draw[->,rounded corners=0.6mm] (0.2,-4.25) -- (0.2,-10.7) -- (11.7, -10.7) -- (11.7, -6);
    \draw (0.3, -4.65) node[left] {\scriptsize {+1}};

    \minbox{11.5}{-5.5};
    \draw[->] (11.5,-5) -- (11.5,-4.25);

    \draw[-] (0.5, -8) -- (2.5, -8);
    \draw[-] (0.5, -9) -- (2.5, -9);
    \draw[-] (8.5, -8) -- (10.5, -8);
    \draw[-] (8.5, -9) -- (10.5, -9);
    \foreach \x/\ch in { 1.1, 2.9, 9.1, 10.9}
      \draw (\x-0.5,-8.75) node[above] {\normalsize {..}};

    \foreach \x/\ch in { 2/4, 10/4 }
      \draw (\x-0.5,-9) node[above] {\normalsize {\ch}};

    \foreach \x in {2, 10}
      \draw (\x-1,-9) rectangle (\x,-8);

    \draw[->] (9.2,-7) -- (9.2,-8);
    \draw[->, rounded corners=0.6mm] (4.5,-7) -- (4.5, -7.5) -- (1.8, -7.5) -- (1.8,-8);
    \draw (9.5, -8) node[above] {\scriptsize \textcolor{gray}{10}};
    \draw (1.5, -8) node[above] {\scriptsize \textcolor{gray}{2}};

    \draw (1.6, -9) node[below] {$\PL_{j-4,4}$};
    \draw (9.5, -9) node[below] {$\PL_{j-4,1}$};

    \fill[fill=white] (0, -7) rectangle (1, -6);
    \draw (0.2, -6.5) node {$\PL_{j-4,\infty}=2$};

    \draw (-1, -8.5) node[left] {$\GPL:$};
    \draw[->, rounded corners=0.6mm] (9.5,-10) -- (9.5, -10.3) -- (11.3, -10.3) -- (11.3,-6);
    \draw[->, rounded corners=0.6mm] (1.5,-10) -- (1.5, -10.5) -- (11.5, -10.5) -- (11.5,-6);
   \end{tikzpicture}
  }
   \caption{Example usage of the $\GPL$ array for $j=16$.
     The value of $\PL_{j,4}$ computed in iteration $j$ depends on shaded
     elements from $\PL$ array. Rather than scanning them all, we apply
     Lemma~\ref{lemm:gpl-correctness}. Since $|P_{j,4}|\geq 2$ we get
     $P_{j,4}=P_{j-4,4} \cup \{14\}$. Therefore we can compute $\PL_{j,4}$ as
     $\min\{\PL_{j-4,4},\PL[13]+1\}$. The value of $\PL_{j-4,4}$ was computed during
     iteration $j-4$ and stored at position $\min P_{j-4,4}-4 = \min P_{j,4}-4=2$ in the $\GPL$
     array, and by Lemma~\ref{lemm:no-overwriting-lemma} it was not overwritten
     between iterations $j-4$ and $j$. Thus we compute $\PL_{j,4}$ in constant
     time as $\min\{\GPL[2],\PL[13]+1\}$ and update $\GPL[2]$ with the new value.
   }\label{fig:GPL-example}
\end{figure}

By the above lemma, $P_{j,\Delta}=P_{j-\Delta,\Delta}\cup\{\max
P_{j,\Delta}\}$ whenever $|P_{j,\Delta}|\ge 2$. Thus we can compute
$\PL_{j,\Delta} = \min\{\PL[i-1]+1 : i\in P_{j,\Delta}\}$ from
$\PL_{j-\Delta,\Delta}$ in constant time. We will store the value
$\PL_{j,\Delta}$ in an array $\GPL[1..n]$ at the position $m=\min
P_{j,\Delta}-\Delta$. Note that $m$ is the predecessor of $\min
P_{j,\Delta}$ in $P_j$ and the position is shared by
$\PL_{j-\Delta,\Delta}$ (when $|P_{j,\Delta}| \ge 2$).  The following
lemma shows that the position is not overwritten by another value
between the rounds $j-\Delta$ and $j$. See Figure~\ref{fig:GPL-example}
for an example.

\begin{lemm}
Let $m=\min P_{j,\Delta}-\Delta$. For all $\ell\in[j-\Delta+1..j-1]$, 
$m \not\in P_\ell$.
\label{lemm:no-overwriting-lemma}
\end{lemm}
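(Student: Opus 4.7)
The plan is to argue by contradiction, exploiting the fact that $m$ and $i := \min P_{j,\Delta}$ are consecutive elements of $P_j$. My first step would be to unpack this consecutiveness: by the definition of $P_{j,\Delta}$, every element of $P_{j,\Delta}$ (in particular $i$) has an in-$P_j$-predecessor at distance exactly $\Delta$, so $m = i - \Delta \in P_j$ and no element of $P_j$ lies strictly between $m$ and $i$. In particular $x := S[m..j]$ is a palindrome.

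Next I would assume, for contradiction, that $m \in P_\ell$ for some $\ell$ with $j-\Delta+1 \le \ell \le j-1$, so that $x' := S[m..\ell]$ is a palindromic prefix of $x$. The key combinatorial step, which I would state as a one-line observation (it is essentially Lemma~\ref{lemm:palbor} applied to $\rev{x}$), is that a palindromic prefix of a palindrome is also a palindromic suffix of that palindrome: reversing $x$ turns the prefix $x'$ into a suffix, and since $x'$ is fixed by reversal and $x$ equals its own reverse, $x'$ genuinely occurs as a suffix of $x$. Applied to $x'$ inside $x$, this produces an index $p \in P_j$ of the form $p = j - |x'| + 1 = m + (j - \ell)$.

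To finish, I would simply plug in the range of $\ell$: the constraints $\ell \in [j-\Delta+1,\,j-1]$ translate into $p \in [m+1,\,m+\Delta-1] = [m+1,\,i-1]$, which contradicts the fact, established in the first step, that $m$ and $i$ are consecutive in $P_j$. I do not anticipate any real obstacle; the only mildly non-obvious ingredient is the ``palindromic prefix of a palindrome is also a palindromic suffix'' trick, and that is a standard short observation. Everything else is a direct consequence of the definitions and a pair of endpoint computations.
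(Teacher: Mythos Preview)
Your argument is correct. The only point I would phrase more carefully is the first step: from the paper's definition $P_{j,\Delta}=\{p_l : 1<l\le |P_j|,\ p_l-p_{l-1}=\Delta\}$ it indeed follows that $i:=\min P_{j,\Delta}$ has an immediate predecessor $m=i-\Delta$ in $P_j$, so $m\in P_j$, $S[m..j]$ is a palindrome, and nothing in $P_j$ lies strictly between $m$ and $i$. After that your reflection trick (a palindromic prefix of the palindrome $S[m..j]$ is also a suffix of it) gives $p=m+(j-\ell)\in P_j$ with $m<p<i$, a clean contradiction.

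This is a genuinely different route from the paper. The paper also argues by contradiction from $m\in P_\ell$, but instead of reflecting inside $S[m..j]$ it \emph{shrinks} the palindrome $S[m..\ell]$ symmetrically by $h=\ell-j+\Delta$ on each side, obtaining a palindrome $S[m+h..j-\Delta]$ and hence an element $m+h\in P_{j-\Delta}$ strictly between $m$ and $m+\Delta$; the contradiction is then with $m$ being the predecessor of $\min P_{j-\Delta,\Delta}$ in $P_{j-\Delta}$, a fact inherited from the analysis around Lemma~\ref{lemm:gpl-correctness}. Your version has the advantage of staying entirely inside $P_j$: the needed consecutiveness of $m$ and $i$ in $P_j$ is immediate from the definition of $P_{j,\Delta}$, so no appeal to the structure of $P_{j-\Delta}$ is required. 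The paper's version, on the other hand, dovetails with the surrounding narrative (which is already comparing $P_{j,\Delta}$ with $P_{j-\Delta,\Delta}$), so it fits the exposition even if it leans on one extra fact.
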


\begin{proof}
  Suppose to the contrary that $m\in P_\ell$ for some
  $\ell\in[j-\Delta+1..j-1]$, i.e., $S[m..\ell]$ is a palindrome.
  Then $S[m+h..\ell-h]$ for $h=\ell-j+\Delta$ is a palindrome too (see
  Figure~\ref{fig:no-overwriting-lemma}).
  Since $\ell-h=j-\Delta$ and $m<m+h<m+\Delta=\min
  P_{j-\Delta,\Delta}$, this contradicts $m$ being the predecessor of
  $\min P_{j-\Delta,\Delta}$ in $P_{j-\Delta}$.
\end{proof}

\begin{figure}[t]
  \centering\small
  \begin{tikzpicture}[scale=0.40]

  \draw[fill=gray!10] (0,0) rectangle (4,1);
  \draw[fill=gray!10] (4,0) rectangle (6,1);
  \draw[fill=gray!10] (6,0) rectangle (14,1);
  \draw[fill=gray!10] (14,0) rectangle (18,1);
  \draw[fill=gray!10] (18,0) rectangle (20,1);

  \draw (17.6,0) node[below] {$\ell$} rectangle (17.6,0);
  \draw (19.8,0) node[below] {$j$} rectangle (19.8,0);
  \draw (13.1,0) node[below] {$j\hspace{-0.1cm}-\hspace{-0.1cm}\Delta$} rectangle (13.1,0);
  \draw (7.1,0) node[below] {$m\hspace{-0.1cm}+\hspace{-0.1cm}\Delta$} rectangle (7.1,0);
  \draw (0.4,0) node[below] {$m$} rectangle (0.4,0);

  \draw[-] (0, 1) .. controls (5,5) and (13,5) .. (18, 1);
  \draw[-,dashed] (4, 1) .. controls (7,2.5) and (11,2.5) .. (14, 1);
  
  \draw[<->] (0,1.5) -- (4,1.5);
  \draw[<->] (14,1.5) -- (18,1.5);

  \draw (2, 1.5) node[above] {$h$} rectangle (2, 1.5);
  \draw (16, 1.5) node[above] {$h$} rectangle (16, 1.5);

  \draw[dotted] (0,1.5) -- (0,1);
  \draw[dotted] (4,1.5) -- (4,1);
  \draw[dotted] (14,1.5) -- (14,1);
  \draw[dotted] (18,1.5) -- (18,1);

  \end{tikzpicture}
  \caption{Proof of Lemma~\ref{lemm:no-overwriting-lemma}: if $m\in P_{\ell}$
    then $m+h\in P_{j-\Delta}$.}
\label{fig:no-overwriting-lemma}
\end{figure}
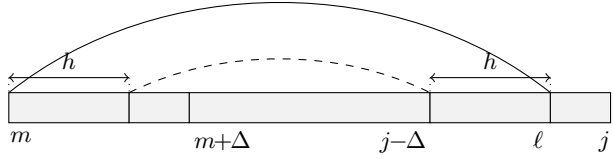

\begin{figure}[ht!]
\centering
\begin{tabbing}
00: \=\qquad\=\qquad\=\qquad\=\qquad\=\qquad\=\kill
 \balgorithm{Palindromic-length}($S[1..n]$)\\
 1:\>$\PL[0] \la 0$\\
 2:\>$G \la ()$\\
 3:\>\bfor $j \la 1$ \bto $n$ \bdo\\
 4:\>\>$G' \la ()$\\
 5:\>\>\bforeach $(i,\Delta,k)\in G$ \bdo\\
 6:\>\>\>\bif $i>1$ \band $S[i-1]=S[j]$ \bthen\\
 7:\>\>\>\>$G'.\text{pushback}((i-1,\Delta,k))$ \qquad // appends the given triple\\
 8:\>\>$G'' \la ()$\\
 9:\>\>$r \la -j$\qquad// makes $i-r$ big enough to act as $\infty$\\
 10:\>\>\bforeach $(i,\Delta,k)\in G'$ \bdo\\
 11:\>\>\>\bif $i-r \neq \Delta$ \bthen\\
 12:\>\>\>\>$G''.\text{pushback}((i,i-r,1))$\\
 13:\>\>\>\>\bif $k>1$ \bthen\\
 14:\>\>\>\>\>$G''.\text{pushback}((i+\Delta,\Delta,k-1))$\\
 15:\>\>\>\belse\\
 16:\>\>\>\>$G''.\text{pushback}((i,\Delta,k))$\\
 17:\>\>\>$r \la i+(k-1)\Delta$\\
 18:\>\>\bif $j > 1$ \band $S[j-1]=S[j]$ \bthen\\
 19:\>\>\>$G''.\text{pushback}((j-1,j-1-r,1))$\\
 20:\>\>\>$r \la j-1$\\
 21:\>\>$G''.\text{pushback}((j,j-r,1))$\\
 22:\>\>$G \la ()$\\
 23:\>\>$(i',\Delta',k') \la G''.\text{popfront}()$ \qquad // removes and returns the first triple\\
 24:\>\>\bforeach $(i,\Delta,k)\in G''$ \bdo\\
 25:\>\>\>\bif $\Delta' = \Delta$ \bthen\\
 26:\>\>\>\>$k'=k'+k$\\
 27:\>\>\>\belse\\
 28:\>\>\>\>$G.\text{pushback}((i',\Delta',k'))$\\
 29:\>\>\>\>$(i',\Delta',k') \la (i,\Delta,k)$\\
 30:\>\>$G.\text{pushback}((i',\Delta',k'))$\\
 31:\>\>$\PL[j] \la j$\\
 32:\>\>\bforeach $(i,\Delta,k)\in G$ \bdo\\
 33:\>\>\>$r \la i + (k-1)\Delta$\\
 34:\>\>\>$m \la \PL[r-1]+1$\\
 35:\>\>\>\bif $k>1$ \bthen\\
 36:\>\>\>\>$m \la \min(m,\GPL[i-\Delta])$\\
 37:\>\>\>\bif $\Delta \leq i$ \bthen\\
 38:\>\>\>\>$\GPL[i-\Delta] \la m$\\
 39:\>\>\>$\PL[j] \la \min(\PL[j],m)$\\
 40:\>\breturn $\PL[n]$
\end{tabbing}
\caption{Algorithm for computing the palindromic length in $\Oh{n\log
    n}$ time.}
\label{fig-algorithm}
\end{figure}

The full algorithm is given in Figure~\ref{fig-algorithm}. The running
time of round $j$ is $\Oh{|G_{j-1}|+|G_j|}$. Since $|G_j|=\Oh{\log j}$
for all $j$, we obtain the following result.

\begin{thrm}
  The palindromic length of a string of length $n$ can be computed in
  $\Oh{n\log n}$ time and $\Oh{n}$ space.
\end{thrm}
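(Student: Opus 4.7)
The plan is to verify that the algorithm in Figure~\ref{fig-algorithm} correctly computes $\PL[n]$ within the claimed bounds by combining the preceding structural lemmas, then to bound the time and space by a straightforward accounting argument. Correctness of $\PL[j]$ follows from the recurrence at the start of Section~\ref{sec:quadratic}, so it suffices to show that in round $j$ the algorithm realizes $\min\{\PL[i-1]+1 : i\in P_j\}$, which factors through the partition $P_j=\bigcup_\Delta P_{j,\Delta}$ as
\begin{equation*}
  \PL[j] \;=\; \min_{\Delta} \PL_{j,\Delta},\qquad
  \PL_{j,\Delta} \;=\; \min\{\PL[i-1]+1 : i\in P_{j,\Delta}\}.
\end{equation*}

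First, I would argue that after lines~4--30 the list $G$ equals $G_j$: lines~5--7 implement the filter of Lemma~\ref{lemm:groups}, producing $G'_j$; lines~8--21 split the first triple of each partition whose preceding gap has changed and insert the short palindromes of length $1$ and (when applicable) $2$, producing $G''_j$; and lines~22--30 coalesce neighbouring triples that now share a $\Delta$, producing $G_j$. Each triple is touched a constant number of times, so this phase runs in $\Oh{|G_{j-1}|+|G_j|}$ time by the lemma at the end of Section~\ref{sec:groups}.

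Next, for lines~31--39 I would argue that each triple $(i,\Delta,k)\in G_j$ contributes its value $\PL_{j,\Delta}$ in $\Oh{1}$ time. When $k=1$, there is only the single position $r=i$ and $\PL_{j,\Delta}=\PL[r-1]+1$. When $k\ge 2$, Lemma~\ref{lemm:gpl-correctness} gives $P_{j,\Delta}=P_{j-\Delta,\Delta}\cup\{\max P_{j,\Delta}\}$, so $\PL_{j,\Delta}=\min\{\PL_{j-\Delta,\Delta},\;\PL[r-1]+1\}$ for $r=\max P_{j,\Delta}=i+(k-1)\Delta$. The value $\PL_{j-\Delta,\Delta}$ was stored in $\GPL[\min P_{j-\Delta,\Delta}-\Delta]=\GPL[i-\Delta]$ during round $j-\Delta$; Lemma~\ref{lemm:no-overwriting-lemma} guarantees that no intermediate round $\ell\in[j-\Delta+1..j-1]$ writes to index $i-\Delta$ (since $i-\Delta\notin P_\ell$ would be required), so the cell still holds $\PL_{j-\Delta,\Delta}$. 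Overwriting $\GPL[i-\Delta]$ with $\PL_{j,\Delta}$ maintains the invariant for future rounds. Thus this phase also runs in $\Oh{|G_j|}$ time.

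Finally, the time bound follows by summing: round $j$ costs $\Oh{|G_{j-1}|+|G_j|}=\Oh{\log j}$ by Lemma~\ref{corr:number-of-gaps}, so the total is $\sum_{j=1}^{n}\Oh{\log j}=\Oh{n\log n}$. The space consists of the arrays $\PL[0..n]$ and $\GPL[1..n]$ plus the current list $G$ of size $\Oh{\log n}$, for $\Oh{n}$ in total. The only subtle point, and the one I would spend the most care on, is the $\GPL$ bookkeeping: one must check that the index $i-\Delta$ used for reading in round $j$ is the same index that was used for writing in round $j-\Delta$, which is precisely $\min P_{j-\Delta,\Delta}-\Delta=\min P_{j,\Delta}-\Delta=i-\Delta$, and that Lemma~\ref{lemm:no-overwriting-lemma} forbids any intermediate write to that cell. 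Everything else is routine.
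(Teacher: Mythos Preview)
Your proposal is correct and follows essentially the same approach as the paper: the paper's own proof is a two-line summary that round $j$ costs $\Oh{|G_{j-1}|+|G_j|}$ and $|G_j|=\Oh{\log j}$, with correctness deferred to the preceding lemmas and prose; you have simply spelled out how those pieces (Lemmas~\ref{lemm:groups}, \ref{corr:number-of-gaps}, \ref{lemm:gpl-correctness}, \ref{lemm:no-overwriting-lemma}) assemble into the full argument. The one small wording slip is in your parenthetical about Lemma~\ref{lemm:no-overwriting-lemma}: a write to $\GPL[i-\Delta]$ in round $\ell$ requires $i-\Delta\in P_\ell$, and the lemma supplies $i-\Delta\notin P_\ell$; your phrase ``since $i-\Delta\notin P_\ell$ would be required'' inverts this, though the intent is clear.
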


As with the quadratic-time algorithm, the algorithm can be modified to
produce an actual minimum palindromic factorization without an
asymptotic increase in time or space complexities: we need only store
with each palindromic length in $\PL$ and $\GPL$, the length of the
last palindrome in the corresponding minimum factorization.  The
algorithm is also online in the sense that the string is processed
from left to right and, for each $j$, the character $S[j]$ is
processed in $\Oh{\log j}$ time, after which we can report the
palindromic length $\PL(S[1..j])$ in constant time and the
corresponding factorization in $\Oh{\PL(S[1..j])}$ time.

\section{Average and Worst Case}

In this section, we show that the average case time complexity of the
algorithm is linear, but that the worst case is indeed
$\Theta(n\log n)$.

\begin{thrm}
\label{thrm:average}
  The average case time complexity of the algorithms in
  Figure~\ref{fig-quadratic-algorithm} and in
  Figure~\ref{fig-algorithm} is $\Oh{n}$.
\end{thrm}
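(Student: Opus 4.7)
The plan is to bound the expected running time by the expected number of occurrences of palindromic substrings in a random string, and then evaluate this expectation via a geometric-series argument.

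First, I would observe that both algorithms run in time $\Oh{n + \sum_{j=1}^n |P_j|}$. For the quadratic algorithm in Figure~\ref{fig-quadratic-algorithm} this is immediate from the two \textbf{foreach} loops at step $j$, as noted already in Section~\ref{sec:quadratic}. For the algorithm in Figure~\ref{fig-algorithm}, iteration $j$ costs $\Oh{|G_{j-1}| + |G_j|}$; since the triples of $G_j$ partition $P_j$ and each triple has $k \ge 1$, we have $|G_j| \le |P_j|$, so the same bound applies. Thus both running times are $\Oh{n + N(S)}$, where $N(S) = \sum_{j=1}^n |P_j|$ counts the pairs $(i,j)$ with $1 \le i \le j \le n$ such that $S[i..j]$ is a palindrome, i.e.\ the total number of palindromic substring occurrences in $S$.

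Second, I would fix the standard randomness model: $S$ is drawn uniformly from $\Sigma^n$ with $\sigma = |\Sigma| \ge 2$ and the $n$ characters independent. A length-$\ell$ string is a palindrome iff $\lfloor \ell/2 \rfloor$ equalities between pairs of its characters hold, so $\Pr[S[i..j]\text{ is a palindrome}] = \sigma^{-\lfloor (j-i+1)/2 \rfloor}$. By linearity of expectation and grouping by length $\ell = j-i+1$,
\[
E[N(S)] = \sum_{1 \le i \le j \le n} \sigma^{-\lfloor (j-i+1)/2 \rfloor} = \sum_{\ell=1}^{n} (n-\ell+1)\,\sigma^{-\lfloor \ell/2 \rfloor}.
\]

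Third, I would bound this by extending the sum to infinity and evaluating the resulting geometric series:
\[
E[N(S)] \le n \sum_{\ell=1}^{\infty} \sigma^{-\lfloor \ell/2 \rfloor} = n\left(1 + 2\sum_{k=1}^{\infty} \sigma^{-k}\right) = n\left(1 + \frac{2}{\sigma - 1}\right) = \Oh{n},
\]
for any fixed $\sigma \ge 2$. Combining with the running-time bound $\Oh{n + N(S)}$ yields the claimed $\Oh{n}$ average-case bound for both algorithms.

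There is no real obstacle here; the only steps that require a sentence of justification are the inequality $|G_j| \le |P_j|$ for the fast algorithm (so that the average-case analysis of the quadratic algorithm transfers immediately) and an explicit statement of the uniform-i.i.d.\ model under which the expectation is taken. The remaining computation is a routine geometric sum.
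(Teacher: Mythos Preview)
Your proposal is correct and follows essentially the same approach as the paper: both bound the running time by the expected total number of palindromic suffixes and evaluate that expectation as a geometric series, arriving at the same constant $1+\tfrac{2}{\sigma-1}=\tfrac{\sigma+1}{\sigma-1}\le 3$. The only cosmetic difference is that the paper phrases the computation as a count over all $\sigma^n$ strings of the palindromic suffixes at the last position (then notes the same bound applies at every position), whereas you use linearity of expectation over all pairs $(i,j)$ directly.
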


\begin{proof}
  Consider the set $\Sigma^n$ of the $\sigma^n$ strings of length $n$ over
  an alphabet $\Sigma$ of size $\sigma>1$. All of them have a
  palindromic suffix of length one, $\sigma^{n-1}$ of them have a
  palindromic suffix of length two, and the same number have a
  palindromic suffix of length three (assuming $n\ge 3$). More
  generally, for $1 \le k \le n$, the number of strings with a
  palindromic suffix of length $k$ is $\sigma^{n-k/2}$ when $k$ is
  even and $\sigma^{n-(k-1)/2}$ when $k$ is odd. Then the total number
  of palindromic suffixes in $\Sigma^n$ is
  \[
  \sum_{i=1}^{\lfloor n/2\rfloor} \sigma^{n-i} + 
  \sum_{i=1}^{\lceil n/2\rceil} \sigma^{n-i+1} < \sigma^{n}/(\sigma-1)
  + \sigma^{n+1}/(\sigma-1) = \frac{\sigma+1}{\sigma-1}\sigma^n \le
  3\sigma^n\ .
  \]
  Therefore the average number of palindromes ending at any position is
  less than three, and both algorithms spend a constant time on
  average for processing each position.
\end{proof}

We show the worst case complexity of the algorithm by constructing a
family of strings based on the Zimin words~\cite[Chapter~5.4]{BLRS08}.  Let
$Z_0=\varepsilon$, and $Z_i=Z_{i-1}iZ_{i-1}$ for $i>0$. The limit of
this sequence is the infinite Zimin word
$Z=1213121412131215\dots$. For a non-negative integer $n$, let $B(n)$
be the number of 1-bits in the binary representation of $n$. For
example, $B(0)=0$, $B(1)=1$, $B(7)=3$ and $B(8)=1$.

\begin{lemm}
\label{lm:zimin}
  The prefix $Z[1..n]$ of the infinite Zimin word $Z$ has exactly
  $B(n)$ suffix palindromes.
\end{lemm}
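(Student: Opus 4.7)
I would prove this by strong induction on $n$, exploiting the recursive structure of the Zimin words together with the binary expansion of $n$. Let $K = \lfloor \log_2 n \rfloor$, so $2^K \le n < 2^{K+1}$, and write $n = 2^K + n'$ with $0 \le n' \le 2^K - 1$. Since $B(n) = B(n') + 1$, it suffices to show that $Z[1..n]$ has exactly one more non-empty palindromic suffix than $Z[1..n']$, with the base case $n = 0$ trivial.

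First I would record three structural facts, each immediate by induction on $k$: (i) every $Z_k$ is a palindrome, because $Z_k = Z_{k-1}\, k\, Z_{k-1}$ and $k$ is a single symbol; (ii) the symbol $K+1$ occurs exactly once in $Z[1..n]$, at position $2^K$, since it first appears at the centre of $Z_{K+1}$ and its next occurrence (in $Z_{K+2}$) is at position $3 \cdot 2^K > n$; and (iii) the decomposition $Z_{K+1} = Z_K (K+1) Z_K$, together with $Z_K$ being a prefix of $Z$, yields $Z[1..n] = Z_K \cdot (K+1) \cdot Z[1..n']$, so the length-$n'$ suffix of $Z[1..n]$ coincides with $Z[1..n']$.

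I would then classify the non-empty palindromic suffixes of $Z[1..n]$ by length $\ell$. For $\ell \le n'$, fact (iii) makes them agree as strings with the non-empty suffixes of $Z[1..n']$, contributing $B(n')$ palindromic suffixes by the inductive hypothesis. For $\ell > n'$, any such suffix must contain the unique occurrence of $K+1$ by (ii), and palindromicity forces $K+1$ to sit at the centre, which pins down $\ell = 2n'+1$ (or $\ell = 1$ when $n' = 0$). I would then verify that this single candidate really is a palindrome: it has the form $\rev{w} \cdot (K+1) \cdot w$ with $w = Z[1..n']$, and $\rev{w}$ arises as the length-$n'$ suffix of $Z_K$, which equals the reverse of $w$ precisely because $Z_K$ is a palindrome by (i). This contributes exactly one extra palindromic suffix, so $f(n) = B(n') + 1 = B(n)$.

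The main obstacle is the uniqueness argument underlying fact (ii): one must track precisely where the fresh symbol $K+1$ sits and use the bound $n < 2^{K+1}$ to rule out any second occurrence. Once that is in place, the centering constraint that pins down $\ell = 2n'+1$ is mechanical. A minor secondary point is handling the boundary cases $n' = 0$ (only the singleton $K+1$ is new) and $n' = 2^K - 1$ (where $Z[1..n] = Z_{K+1}$ itself is the new palindromic suffix) inside the same formula.
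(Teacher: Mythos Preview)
Your proof is correct and relies on the same key observation as the paper: the unique occurrence of the largest symbol $K+1$ forces any palindromic suffix containing it to be centered there, pinning down exactly one ``long'' palindromic suffix. The only difference is organizational---the paper writes out the full block factorization $Z[1..n] = Z_{i_k}(i_k{+}1)\cdots Z_{i_1}(i_1{+}1)$ and iterates the centering argument over all $k=B(n)$ blocks at once, whereas you peel off the highest bit and recurse; your induction is simply the rolled-up form of the paper's ``and so on.''
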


\begin{proof}
  From the definition, it is easy to see that
  the prefix $Z[1..n]$ has a unique factorization of the form
  \[
  Z[1..n] = Z_{i_k}(i_{k}+1) \cdot Z_{i_{k-1}}(i_{k-1}+1) \cdots Z_{i_2}(i_2+1)
  \cdot Z_{i_1}(i_1+1)
  \]
  where $0 \le i_1 < i_2 < \dots < i_{k-1} < i_k$. For example,
  $Z[1..10]=Z_34Z_12$.
  Since the length of
  a factor $Z_{i}(i+1)$ is $2^i$, we must have that $\sum_{j=1}^k
  2^{i_j}=n$. Thus $i_1,\dots,i_k$ are the positions of 1-bits in the
  binary representation of $n$, and $k=B(n)$.

  Let $n_j=2^{i_j}$ for $j\in[1..k]$.  Clearly, $Z[2n_k-n..n]$ is a
  palindrome of length $2(n-n_k)+1$ centered at $Z[n_k]=(i_k+1)$. For
  example, $Z[6..10]=21412$ is a palindrome centered at $Z[8]=4$.
  Since $Z[n_k]$ is the only occurrence of $(i_k+1)$ in $Z[1..n]$,
  there can be no other suffix palindromes with a starting position
  in $Z[1..n_k]$. By a similar argument, there is exactly one suffix
  palindrome with a starting position in $Z[n_k+1..n_k+n_{k-1}]$, the one
  centered at $Z[n_k+n_{k-1}]=(i_{k-1}+1)$, and so on. In total,
  $Z[1..n]$ has exactly $k$ suffix palindromes.
\end{proof}

\begin{thrm}
  The running time of the algorithm in Figure~\ref{fig-algorithm} for
  input $Z[1..n]$ is $\Theta(n\log n)$.
\end{thrm}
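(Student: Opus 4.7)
The plan is to show that the algorithm's running time on $Z[1..n]$ is $\Theta\!\left(\sum_{j=1}^n |G_j|\right)$, that $|G_j| = B(j)$ for every prefix, and that $\sum_{j=1}^n B(j) = \Theta(n \log n)$. The upper bound $\Oh{n \log n}$ is already known from the previous theorem, so the new content is the matching lower bound.

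The core combinatorial step is identifying the gap structure of $P_j$. Reusing the canonical factorization from the proof of Lemma~\ref{lm:zimin}, write $Z[1..j] = Z_{i_k}(i_k{+}1) \cdots Z_{i_1}(i_1{+}1)$ with $i_1 < i_2 < \cdots < i_k = B(j)$, and set $n_\ell = 2^{i_\ell}$. The $k$ suffix palindromes of $Z[1..j]$ are centered at positions $n_k,\ n_k + n_{k-1},\ \ldots,\ n_k + \cdots + n_1 = j$. Since a palindrome centered at $c$ and ending at $j$ starts at position $2c - j$, the elements of $P_j$ in increasing order are $p_\ell = 2(n_k + n_{k-1} + \cdots + n_{k-\ell+1}) - j$, giving gaps $p_\ell - p_{\ell-1} = 2\, n_{k-\ell+1} = 2^{i_{k-\ell+1}+1}$. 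Because the exponents $i_1,\ldots,i_k$ are pairwise distinct, all gaps are distinct; consequently every non-empty partition $P_{j,\Delta}$ is a singleton and $|G_j| = |P_j| = B(j)$.

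Since the main for-loop in Figure~\ref{fig-algorithm} spends time $\Theta(|G_{j-1}|+|G_j|)$ at iteration $j$ (in particular, line~32 alone forces $\Omega(|G_j|)$), the total time on $Z[1..n]$ is $\Theta\!\left(\sum_{j=1}^n B(j)\right)$. The standard identity $\sum_{j=0}^{2^k-1} B(j) = k \cdot 2^{k-1}$, obtained by counting contributions bit by bit (each of the $k$ positions is a $1$ in exactly $2^{k-1}$ values), then yields $\sum_{j=1}^n B(j) = \Theta(n \log n)$ for all $n$, completing the proof.

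The main (and essentially only) obstacle is verifying that the gap sequence in $P_j$ consists of distinct values; once this is in hand, both the combinatorial identification $|G_j| = B(j)$ and the subsequent arithmetic estimate are routine.
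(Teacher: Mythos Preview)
Your proof is correct and follows essentially the same approach as the paper: establish $|G_j|=|P_j|=B(j)$ by showing the gaps in $P_j$ are pairwise distinct, then sum. The paper argues distinctness slightly differently---it observes that each suffix palindrome is at least twice as long as the next, which forces the gap sequence to be strictly decreasing---whereas you compute the gaps explicitly as distinct powers of two; both arguments are straightforward consequences of the factorization in Lemma~\ref{lm:zimin}. One small typo: you wrote ``$i_1 < i_2 < \cdots < i_k = B(j)$'' where you meant $k = B(j)$, not $i_k = B(j)$.
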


\begin{proof}
  By Lemma~\ref{lm:zimin}, $Z[1..j]$ has exactly $B(j)$ suffix
  palindromes, i.e., $|P_j|=B(j)$. From the proof it is easy to see that
  each of the suffix palindromes is at least twice as long as the next
  shorter suffix palindrome. Thus there are no two identical gaps in
  $P_j$ and $|G_j|=|P_j|=B(j)$. Since the algorithm spends
  $\Theta(|G_{j-1}|+|G_j|)$ time in round $j$, the total time complexity is
  $\Theta\!\left(\sum_{j=1}^n B(j)\right)$, which is $\Theta(n\log
  n)$~\cite{McI74}. 
\end{proof}

\section*{Acknowledgements}

Many thanks to the organizers and participants of the Stringmasters
2013 workshops in Verona and Prague, and to the anonymous reviewers.
This research was partially supported by the Italian
MIUR Project PRIN 2010LYA9RH, ``Automi e Linguaggi Formali: Aspetti
Matematici e Applicativi'', and by the Academy of Finland
through grant 268324 and grant 118653 (ALGODAN).


\end{document}